\documentclass[a4paper, intlimits, 10pt]{amsart}


\pagestyle{plain}
\usepackage[T1]{fontenc}
\usepackage{float}

\parindent0pt
\usepackage{times}
\usepackage[english]{babel}
\usepackage[T1]{fontenc}
\usepackage{enumitem}
\usepackage{graphicx}
\usepackage{tikzsymbols}
\usepackage[font=small,labelfont=bf]{caption}

\usepackage{subcaption}
\usepackage{booktabs}
\usepackage{multirow}
\usepackage{tabularx}
\usepackage{color}
\usepackage{eurosym}
\usepackage[numbers,sort&compress]{natbib}

\usepackage{xcolor}
\usepackage{comment}

\definecolor{darkgreen}{RGB}{0,170,0}

\usepackage{tikz}
\usetikzlibrary{
  fit,
  positioning,
  shapes.geometric,
  arrows,
}


\usepackage{amsthm, amssymb, amsfonts}
\usepackage{amsmath}

\DeclareMathOperator*{\argmin}{arg\,min}

\usepackage{accents}


\newtheorem{theorem}{Theorem}[section]

\newtheorem{proposition}[theorem]{Proposition}
\newtheorem{definition}[theorem]{Definition}

\newtheorem{remark}[theorem]{Remark}
\numberwithin{equation}{section}

\begin{document}

\tikzset{
  arrowss/.style = {
    thick,->,>=stealth,
    line width=2pt,
    color = black,
  }
}

\title{On Detecting Spoofing Strategies in High Frequency Trading}

\date{\today}

\author{Xuan Tao}
\address{Shanghai Jiao Tong University, Shanghai, China}
\email{taoxuan@sjtu.edu.cn}

\author{Andrew Day}
\address{Western University, Canada}
\email{aday46@uwo.ca}

\author{Lan Ling}
\address{Shanghai Jiao Tong University (now at Ping An Technology), Shanghai, China}
\email{lan.ling@sjtu.edu.cn}
\email{linglan987@pingan.com.cn}

\author{Samuel Drapeau}
\thanks{
We thank the Fields Institute for the organization of the many ``Fields-China Joint Industrial Problem Solving Workshop'' from which this problem stems.
We also thank TMX and in particular the TMX Analytics Team for supporting this project with profound datasets, high performing computing facilities as well as precious market insights in high frequency trading.
The financial supports from the National Science Foundation of China, Grants Numbers: 11971310 and 11671257; as well as from Shanghai Jiao Tong University, Grant number AF0710020; are gratefully acknowledged.
Andrew Day gratefully acknowledges the funding provided by Mitacs, the Ontario Graduate Scholarship (OGS), and the Natural Sciences and Engineering Research Council of Canada postgraduate scholarship (NSERC PGS)
}
\address{Shanghai Jiao Tong University, Shanghai, China}
\email{sdrapeau@saif.sjtu.edu.cn}
\urladdr{http://www.samuel-drapeau.info}

\begin{abstract}
    Spoofing is an illegal act of artificially modifying the supply to drive temporarily prices in a given direction for profit.
    In practice, detection of such an act is challenging due to the complexity of modern electronic platforms and the high frequency at which orders are channeled.
    We present a micro-structural study of spoofing in a simple static setting.
    A multilevel imbalance which influences the resulting price movement is introduced upon which we describe the optimization strategy of a potential spoofer.
    We provide conditions under which a market is more likely to admit spoofing behavior as a function of the characteristics of the market.
    We describe the optimal spoofing strategy after optimization which allows us to quantify the resulting impact on the imbalance after spoofing.
    Based on these results we calibrate the model to real Level 2 datasets from TMX, and provide some monitoring procedures based on the Wasserstein distance to detect spoofing strategies in real time.

    \vspace{5pt}

    \noindent
    {Keywords:} Spoofing, High Frequency Trading, Imbalance, Limit Order Book. 
\end{abstract}

\maketitle
\section{Introduction}

The act of spoofing is a specific trading activity that aims at artificially modifying the supply on the market, without intend to trade, to move it away from its equilibrium.
One might profit from the resulting short term price movement by canceling the previous supply while the market comes back to its equilibrium.
Such a strategy implies that the spoofer should be able to act anonymously, fast and in a market where all the other agents react to offer and demand.
In this regard, with the recent rise of centrally cleared venue and high frequency algo-trading, the ground for the existence of spoofing schemes is rising, see \citet{shorter2015}.\footnote{In 2010, trader Navinder Singh Sarao was accused of exacerbating a flash crash by placing thousands of E-mini S\&P 500 stock index futures contract orders in one day and changed or moved those orders more than 20 million times before they were cancelled.}
In a competitive market where many potential spoofers are present, spoofing behavior might cancel out, but most regulations consider it as illegal.
For instance, the 2010 Dodd-Frank Act prohibits spoofing -- defined as activity of bidding or offering with the intent to cancel before execution -- that can be prosecuted as ``a felony punishable by up to \$1 million in penalties and up to ten years in prison for each spoofing count''.\footnote{
In 2019, the high frequency company Tower Research Capital agreed to pay a fine of about \$60 million over spoofing allegations.
In 2020, JP Morgan settles spoofing lawsuit alleging fraud for about \$920 million.
}

Yet, for several reasons, detecting and prosecuting spoofing behavior is a challenging problem.
First is the sheer amount of data produced from high frequency trading across many financial products and venues.
Second, it is usually impossible to trace in real time who is behind every trade.
From a CCP viewpoint, they mainly have access to the broker ID through which the trade has been channeled resulting only in aggregated informations.
Furthermore, a potential spoofer might post those trades through different venues and brokers.
Third, aside from a loose definition, it is unclear how a spoofing strategy differs quantitatively from other strategies and what is the resulting impact: Thus, the complexity of quantifying and discriminating spoofing strategies from legitimate ones.
Finally, due to the lack of information from the few regulatory cases, the problem amount to an unsupervised classification problem.
Based on the above points, it seems difficult to provide an efficient way to monitor the market for spoofing behavior.\footnote{Aside from obvious cases or exogenous approaches as for insider prosecution.}
However, a potential spoofer is also confronted to the constraints of modern electronic trading platforms.
Indeed, from the basic spoofing description, the spoofer has to act rapidly in a complex and high frequency environment.
Therefore, it must rely on fast -- henceforth simple -- algorithmic strategies which, due to the complexity of the dynamic structure of a limit order book, is based on aggregated signal.

Along these lines, and in view of this unsupervised classification task, we intentionally address a quantitative analysis of spoofing in a simple setup.
As a basis for this study, let us consider a simple example.
We suppose that in the next period the limit order book shifts up by one unit with a probability $\bar{p}$ and down by one unit otherwise.
From the perspective of an agent whose objective is to purchase two shares, it faces the following three idealized situations.
\begin{enumerate}[label = \arabic* - , fullwidth]
    \item Immediately post a buy market order for a total cost of
        \begin{equation*}
            \hat{C} = 10+11 =21
        \end{equation*}
        \begin{center}    
            \begin{tikzpicture}[node distance=10em]
                \node (lmo1) {\includegraphics[width=0.4\textwidth]{./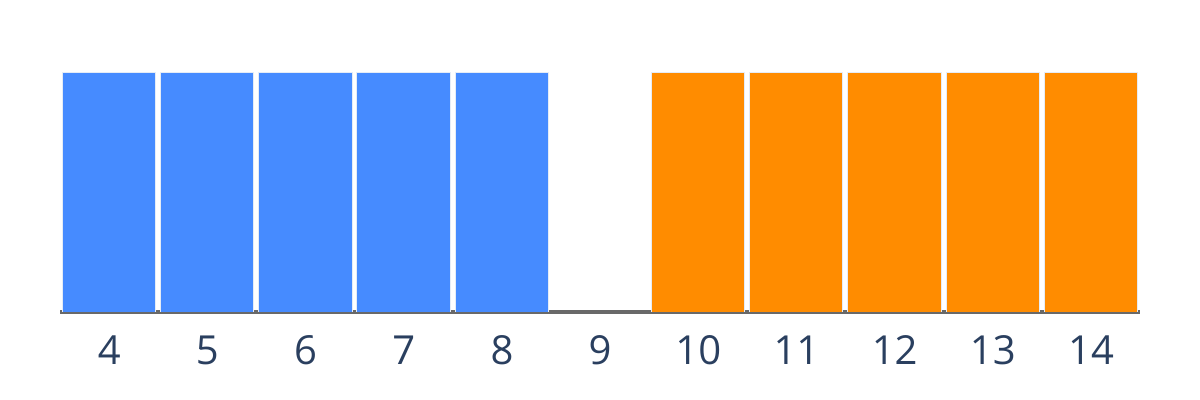}};
                \node (lmo2) [right of = lmo1, xshift =0.2\textwidth] {\includegraphics[width=0.4\textwidth]{./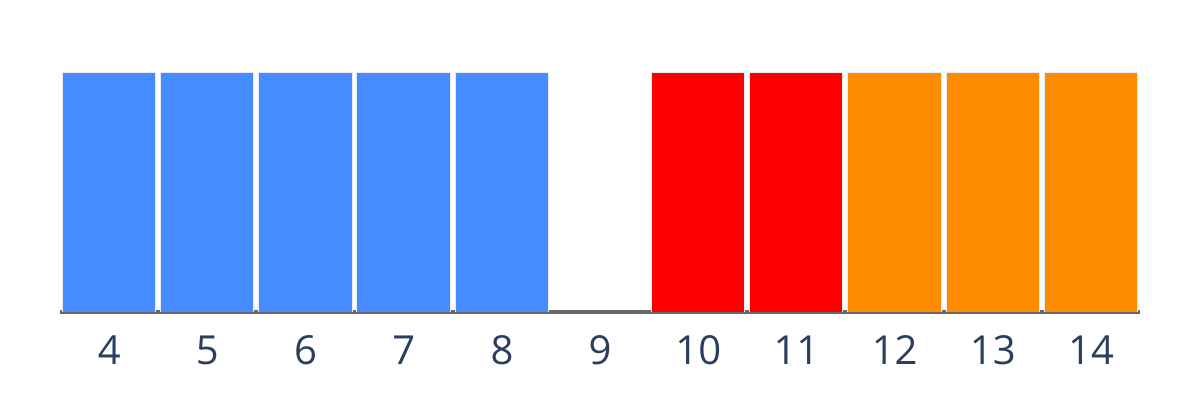}};
                \draw [arrowss] (lmo1)--(lmo2);
            \end{tikzpicture}
        \end{center}
    \item Delay the buy market order for one period resulting in a total average cost of
        \begin{equation*}
            \tilde{C} = (1-\bar{p})\left( 9+10 \right)+\bar{p} \left( 11+12 \right) = \left( 1-\bar{p} \right)19 + \bar{p}23
        \end{equation*}
        which is smaller than $\hat{C}$ if and only if $\bar{p}<1/2$, hence a bearish market.
        
        \begin{center}
            \begin{tikzpicture}[node distance=10em]
              \node (lmo1) {\includegraphics[width=0.4\textwidth]{./figs/lmo.png}};
              \node (lmo2) [right of = lmo1, xshift = 0.2\textwidth, yshift = 0.1\textwidth] {\includegraphics[width=0.4\textwidth]{./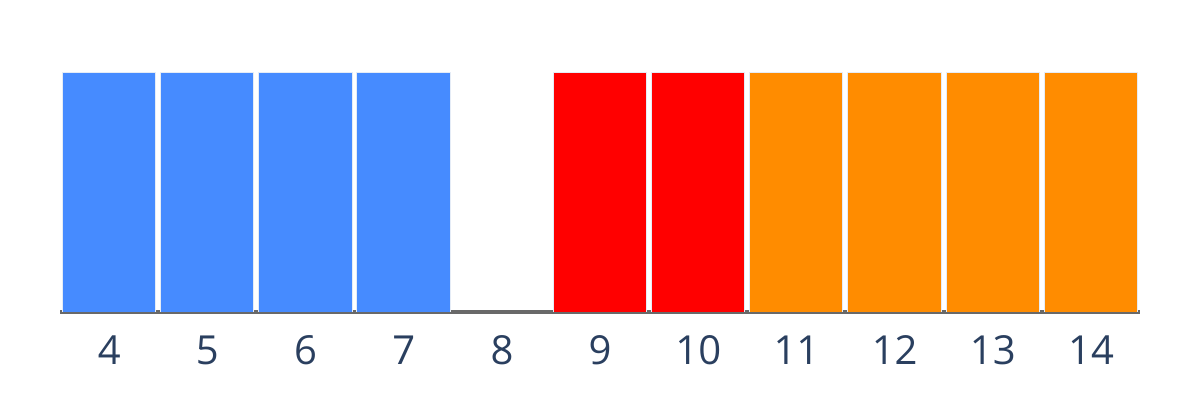}};
              \node (lmo3) [right of = lmo1, xshift = 0.2\textwidth, yshift = -0.1\textwidth] {\includegraphics[width=0.4\textwidth]{./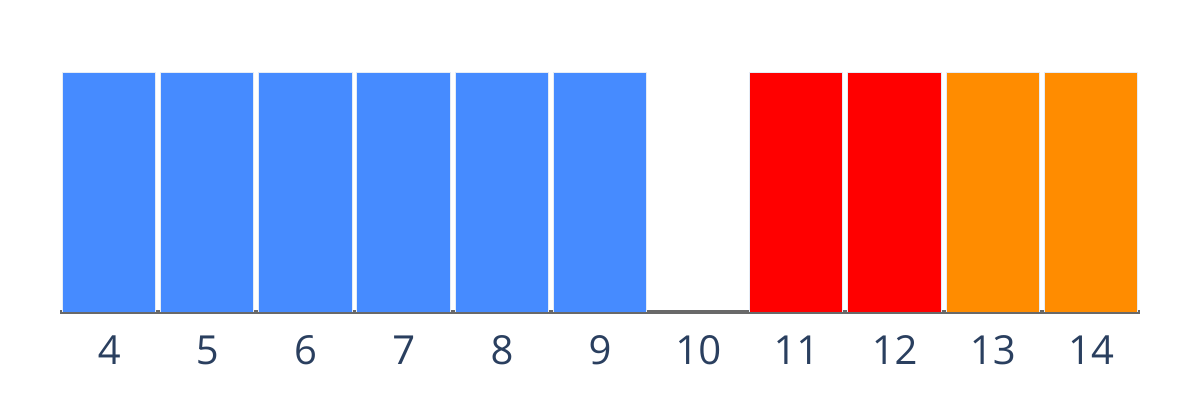}};
            
              \draw [arrowss] (lmo1)-- node[color = black,yshift = 0.1cm,above]{$ 1- \bar{p}$} (lmo2);
              \draw [arrowss](lmo1)-- node[color = black,yshift = 0.1cm,above]{$\bar{p}$} (lmo3);
            \end{tikzpicture}
        \end{center}
    \item Delay the buy market order and post a sell limit order of one share at a distance of one unit from the best ask price to artificially modify the offer and demand resulting in a temporary more bearish state $p<\bar{p}$.\footnote{We suppose that during this time period, if the market moves through limit order posting/canceling, the agent keeps its sell limit order one unit away from the best ask price by rapidly canceling and posting again.}
        Doing so, with a probability $q$, its sell limit order is executed through an incoming market order walking the limit order book beyond one unit.
        For this executed sell order, the agent receives an average price of $q ( (1-p) 10 + p12 )$ while its inventory increases in average to $2 + q$.
        The cost of buying back this increased inventory minus the gain from selling its limit order results in an average net cost of 
        \begin{align*}
            C & = (1-p)\left( 9 + 10 + q 11 \right) + p\left( 11+12+q 13 \right) - q \left( (1-p) 10 + p12 \right) \\
              & = (1-p)19+p23 + q
        \end{align*}

        \begin{center}
            \begin{tikzpicture}[node distance=10em]
                \node (lmo1) {\includegraphics[width=0.4\textwidth]{./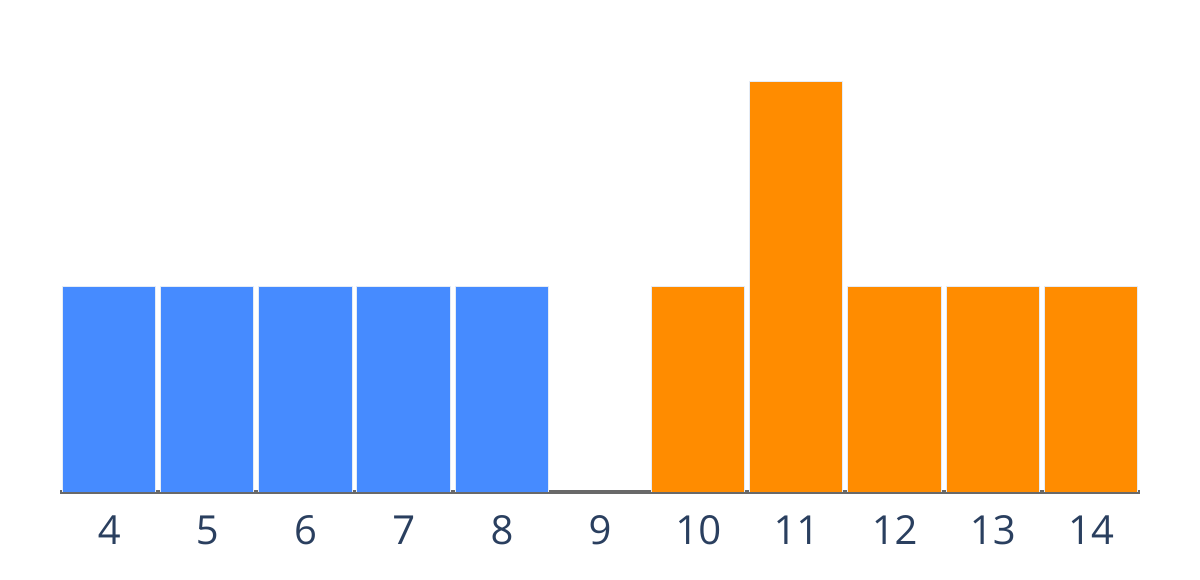}};
                \node (lmo2) [right of = lmo1, xshift = 0.2\textwidth, yshift = 0.1\textwidth] {\includegraphics[width=0.4\textwidth]{./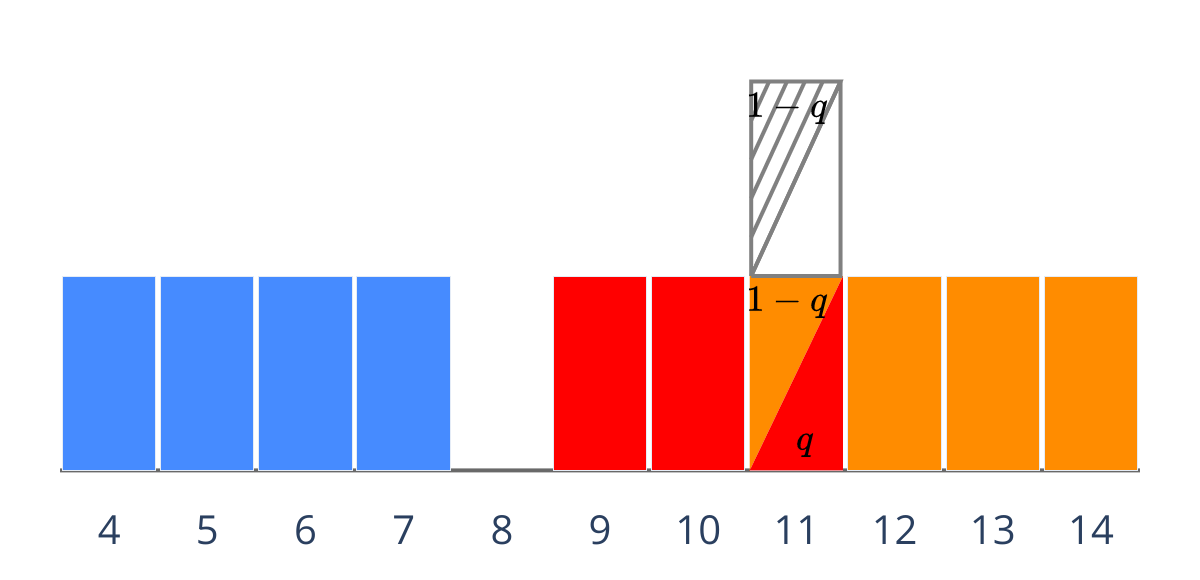}};
                \node (lmo3) [right of = lmo1, xshift = 0.2\textwidth, yshift = -0.1\textwidth] {\includegraphics[width=0.4\textwidth]{./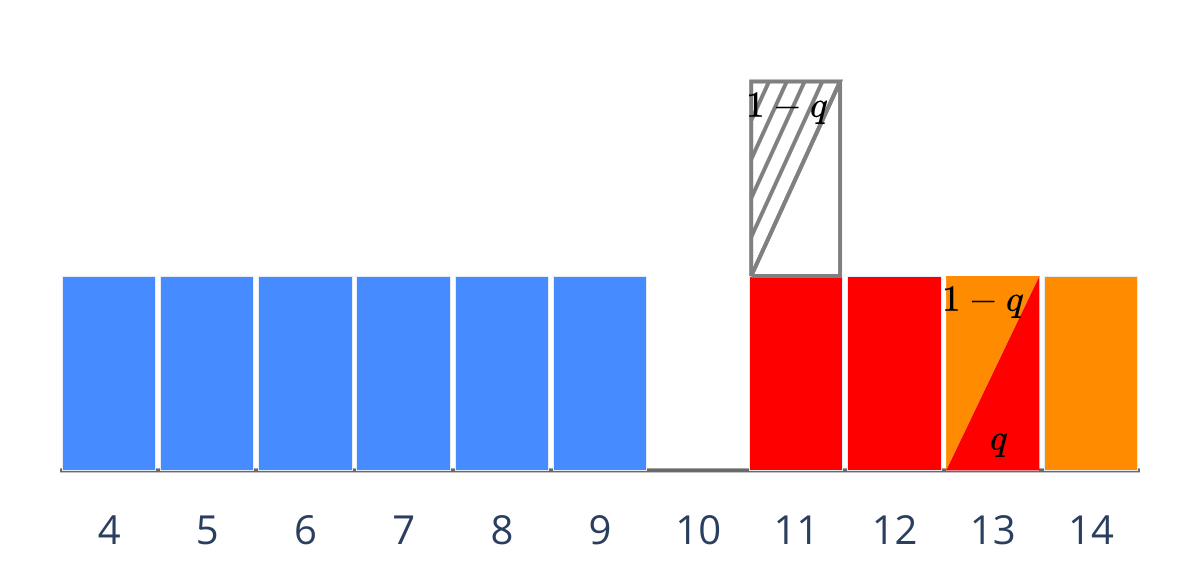}};
                \draw [arrowss] (lmo1)-- node[color = black,yshift = 0.1cm,above]{$ 1- p$} (lmo2);
                \draw [arrowss] (lmo1)-- node[color = black,yshift = 0.1cm,above]{$p$} (lmo3);
            \end{tikzpicture}
        \end{center}
\end{enumerate}
From this simple situation, a profitable spoofing situation depends on the value of $\bar{p}$, how bearish $p<\bar{p}$ the market reacts to an increase of one share at a distance of one unit from the best ask price, as well as the probability $q$ of being adversarially executed through a market order during this time.
The bottom line is this paper is to study the interplay between these different factors and the resulting impact in view of detection procedure, in particular the impact on the price movement as a function of the spoofing size as well as the depth in the limit order book.
Based on this theoretical approach, we present some approaches to track spoofing behavior and calibrate those to real market data.

We model the impact of the offer and demand on the price through the volume imbalance often taken as the ratio of the volume of the best bid divided by the total volume on the best bid and ask.
Since the spoofer never intends to have their orders executed, spoofing is more likely to happen beyond the top of the limit order book.
Indeed, the probability of getting executed is too high, resulting in a negative payoff.
To take this into account we weight the impact on the imbalance in terms of depth as follows
\begin{equation*}
    \bar{\imath} = \frac{ \sum \bar{v}^-_k w_k}{\sum \left( \bar{v}^-_k + \bar{v}^+_k \right)w_k}
\end{equation*}
where $\bar{v}^{\pm}_k$ represents the volume on the bid/ask $k$ units away from the best bid/ask and $w_k$ is the relative impact on the imbalance at level $k$.
If the agent posts a sell limit order $v$ on the ask side at tick level $k_0$, the imbalance moves to
\begin{equation*}
    \imath(v) = \frac{ \sum \bar{v}^-_k w_k}{\sum \left( \bar{v}^-_k + \bar{v}^+_k \right)w_k + w_{k_0}v} \leq \bar{\imath}
\end{equation*}
If we denote by $dp_n$ the probability of a price deviation of $n$ units, the dependence on the imbalance $\imath$ is given as follows
\begin{equation*}
    dp_n = \imath dp^+_n + (1-\imath) dp^-_n, \quad n = \ldots, -2, -1, 0, 1, 2, \ldots
\end{equation*}
where $dp^\pm_n$ represents the price deviation when the imbalance is at its extreme.
When the imbalance $\imath$ is low/high -- the offer/demand dominates -- the price distribution is biased downwards/upwards through $dp^\pm$.
The agent can influence the price distribution in a non linear way through the volume it posts:
\begin{equation*}
    v \longmapsto dp(v) := \imath(v) dp^+ + \left( 1-\imath(v) \right) dp^-
\end{equation*}
Given the probability $dq$ of a sell limit order hitting the limit order book up to a given level the resulting net average cost of spoofing turns out to be
\begin{equation*}
    C(v) = pH + \underbrace{(1-Q) G(H)}_{\text{Cost for optimal situation}}  + \underbrace{ H \mu^+ \left( 2\imath(v)-1 \right)}_{\text{Spoofing impact}} + \underbrace{Q G(H+v)+  v \nu}_{\text{Cost for being caught wrong way}} 
\end{equation*}
where $H$ is the initial objective of shares to acquire, $G$ is the liquidity costs from walking through the limit order book, $\mu^+>0$ is the mean of $dp^+$, $Q$ (resp $\nu$) is the probability (resp mean beyond $k_0$) of being executed beyond $k_0$.
From this expression, there is a competitive aspect between the risk of being caught on the wrong side and the fact of pushing $\imath(v)$ way beyond $1/2$ to get $H\mu^+(2\imath(v) -1)\leq 0$.

In a theoretical part we first provide conditions for the limit order book to admit spoofing manipulation\footnote{In other words, better than immediate or delayed market order.} as a function of the initial imbalance $\bar{\imath}$, local sensitivity of the imbalance $w$, overall price impact $\mu^+$, liquidity cost $G$, as well as the objective $H$.
In short, this model confirms several intuitive facts when spoofing is more likely to occur
\begin{itemize}
    \item if the probability $Q$ for the spoofing order to be executed is small;
    \item if the local sensitivity $w_{k_0}$ or the overall price impact $\mu^+$ is large;
    \item if the amount of share $H$ to buy is large with respect to the depth of the limit order book;
    \item if the initial market imbalance is close to $1/2$, that is, the market is equally balanced between offer and demand.
    \item away from the top of the limit order book;
\end{itemize}
As for the depth of the limit order book -- how liquid the market is -- the results are inconclusive.
For this to be taken into account, one should model how the above mentioned parameters depend on the liquidity of the limit order book.
The subsequent empirical study shows that it is the case, but we can not derive conclusions from this model as in \citet{shorter2015} where illiquid markets seems more prone to spoofing.
We then address the impact of spoofing on the resulting imbalance $\imath_{spoof} = \imath(v_{spoof})$ and discuss its dependence as a function of the aforementioned parameters.
We characterize and discuss the deviation for the imbalance as a function of the different parameters.
In particular as a function of the depth where the spoofing order is posted.
We finally address the situation of a market maker using spoofing strategies for a positive round trip payoffs.

Based on this study, we can theoretically discriminate a spoofed imbalance $\imath_{spoof}$ from the legitimate one $\bar{\imath}$.
Yet, from an outsider perspective, this is a hidden value since only the spoofer is aware of the actual imbalance.
The main idea for the detection is to observe that a successful spoofing strategy requires the execution of a market order.
We therefore compare the imbalance $\imath_-(t)$ before a market order at time $t$ with the imbalance $\imath_+(t)$ after this market order.
If the market order is legitimate, the behavior of these two imbalances should follow some steady state distribution $(\bar{\imath}_-, \bar{\imath}_+)$.
On the other hand, if the market order is the result of a spoofing strategy, the imbalance before the market order should be of the form
\begin{equation*}
    \imath_{spoof} \approx \frac{b}{b+a + wv_{spoof}}<\frac{b}{a+b}=\bar{\imath}_-
\end{equation*}
while returning to its steady state $\bar{\imath}_+$ as soon as the spoofed volumes are canceled.
Hence, a quantification approach is to compare the distance from the instant imbalance $\imath_-(t)$ before a market order at time $t$ with, one the one hand, the long term legitimate one $\bar{\imath}_-$, and with, on the other hand, the theoretical spoofed one $\imath_{spoof}$.
This measure is done according to the current market state situation $\imath_{+}(t)$.
In other terms we measure and compare
\begin{equation*}
    \underbrace{d\left(\imath_-(t), \bar{\imath}_-|\imath_+(t)\right)}_{\substack{
            \text{distance of instant imbalance }\imath_-(t)\\
            \text{before market order to legitimate imbalance }\bar{\imath}_-\\
    \text{given current market conditions }\imath_+(t)}}
    \quad \text{and}\quad 
    \underbrace{d\left(\imath_-(t), \imath_{spoof} |\imath_+(t)\right)}_{\substack{
            \text{distance of instant imbalance }\imath_-(t)\\
            \text{before market order to spoofing imbalance }\imath_{spoof}\\
    \text{given current market conditions }\imath_+(t)}}
\end{equation*}
For the distance, we adopt the non parametric Wasserstein distance.
The technical design, in particular in terms of conditioning, the calibration with market data, implementation as well as the reason for such an approach are explained and illustrated for several stocks from TMX.

Before addressing the relevant literature, let us expose the shortcomings and modeling choices of this approach.
The micro-structure dynamic of the limit order book at high frequency is complex.
To excerpt some key impacts of spoofing behavior we deliberately focus on a static situation where the dynamic of the market is ignored.\footnote{Since we consider the limit order book beyond its top, a dynamic version of the present approach would result into a fairly complex and high dimensional dynamic programing problem.}
For instance, we do not consider situations where compound spoofing behavior happens.
We furthermore assume that there exists a single potential spoofer and that the market is infinitely reactive in the sense that it comes back to its steady state driven by the imbalance.
There is no implicit game where the market acknowledges the existence of a potential spoofer, such as in \citet{kyle1985} for instance.
Finally, we assume that spoofing behavior is happening in a single market, while it has been documented and studied by \citet{kervel2015} that fast traders take advantage of multiple venues to post market orders in one while cancelling their limit orders in other venues.
However, the present approach could also take into account an average imbalance over several venues.
There is also no competitive game between two or more spoofers.
Also, even though we shortly address the situation of a round trip for a market maker and the resulting optimal spoofing behavior, we take the viewpoint of a market taker willing to purchase/sell a given amount of shares.
The overall goal being to understand the mechanism of spoofing in its most simple nature, quantify the resulting impact and derive potential detection procedures.
Refinement of this approach, other take on, as well as more adequate quantification procedures are topics of further studies.

\subsection{Literature review}
There exists a solid stream of research showing that even rational speculative activities might destabilize prices, have an adverse effect on market efficiency or eventually lead to different forms of arbitrage;
From market speculation based on various form of information asymmetries, for instance \citet{kreps1986}, \citet{allen1992} or \citet{jarrow1992}, to price manipulation in limit order books using different market impact assumptions and trading strategies, as studied in \citet{alfonsi2007, alfonsi2010, gatheral2009, gatheral2013}.
The specific case of spoofing behavior has not yet been the subject of much theoretical study.

Although many high frequency trading strategies are legitimate, \citet{shorter2015} point out that high frequency trading firms may engage in potentially manipulative strategies involving the usage of quote cancellations.
\citet{lee2013microstructure} empirically study the change in spoofing behavior following a modification in volume disclosure rules on the Korean Exchange (KRX) at the start of 2002.
Up to the end of 2001 the KRX disclosed the total volume of shares on both sides of the book and also the volumes at each tick up to 5 ticks from the best ask/bid.
At the start of 2002 the KRX stopped disclosing the total volume on both sides in an effort to stop spoofing, but increased the disclosed volumes at each tick from the first 5 to the first 10 ticks from the best ask/bid.
They show that spoofing is profitable and spoofers tend to prefer stocks with higher return volatility, lower market capitalization, lower price and lower managerial transparency.
This study suggests the importance of the depth of book on spoofing strategies and potential price manipulation being carried out through a form of ``volume imbalance''.
\citet{wang2015strategic} followed a similar methodology in empirically studying spoofing on Taiwan's index futures market.
They found consistent results on the impact of spoofing on the market, but without the novel testing ground on changes in the disclosure of volumes deeper in the limit order book.

Some other studies to detect price manipulations are mainly based on learning algorithms.
Among other, \citet{cao2014, cao2015} based on the definition of spoofing in \citep{lipton2013} use K-nearest neighbour, one class support vector machine and adaptive hidden markov models to classify the data.
\citet{enrique2016} characterize spoofing and pinging as full and partial observability of Markov decision processes.
Under a reinforcement learning framework, they find that in order to maximise the investment growth, a trader will always employ spoofing or pinging orders except when market adds extra transaction costs or fines.
In contrast to these empirical studies, our approach focuses on the micro economic features of spoofing behavior, in particular using our main stylized signal, the imbalance, which measures the difference between bid and ask side. 

Concerning the impact of the imbalance on direction of the price movement:
\citet{lipton2013} use the definition on the top of the book for the imbalance and study the impact on the trade arrival dynamic and resulting price movement.
They fit a stochastic model for this behavior on real market data.
\citet{cartea2018enhancing} employ volume imbalance as a signal to improve profits on the liquidation of a collection of shares in a dynamic high-frequency trading environment.
\citet{gould2015} fit logistic regressions between the imbalance and the direction of the subsequent mid-price movement for each of 10 liquid stocks on Nasdaq, and illustrates the existence of a statistically significant relationship.
\citet{xu2019} compute the imbalance at multi position in the limit order book and fit a linear relationship between this imbalance and the mid-price change.
They find that the goodness-of-fit is considerably stronger for large-tick stocks than it is for small-tick stocks.
The impact of order imbalance on prices has also been studied by \citet{cont2014price} and \citet{bechler2017order}, for example.
\citeauthor{bechler2017order} also found that including characteristics of deeper parts of the book may be necessary for forecasting price impact. However, due to the nature of their dataset, they were only able to look at an aggregated form of the depth of book while we are able to use the exact volumes at all depths in the book.
\citet{sirignano2019deep} also used the book volumes beyond the touch to model price movements in a deep learning setting. Further suggesting the impact of depth of book on predicting future price movements.

Finally, the closest work to the present one in terms of quantitative analysis of spoofing behavior in relationship with imbalance is from \citet{cartea2020}.
They adopt a dynamic approach where the trader influences the imbalance to derive the optimal strategy.
They calibrate their model to market data and provide trading trajectories for the spoofer showing that spoofing considerably increases the revenues from liquidating a position.
While being in a dynamic setting, in contrast to the present study, everything happens at the top of the limit order book for the imbalance to be manipulated.
Furthermore, we do not focus here on the resulting gains from the spoofer, be rather on the impact on the imbalance from spoofer as for detection purposes from a regulatory viewpoint.

\subsection{Organization of the paper}
The first Section introduces the model, the imbalance and the dependence of the price movement on that imbalance.
The second Section presents the spoofing strategies, addresses the theoretical conditions for spoofing behavior to happen and provide the resulting imbalance after spoofing together with numerical illustrations.
The third Section is dedicated to the calibration procedure of the model on real Level 2 market data from TMX.
The last Section discusses and introduces a quantitative approach to track spoofing behavior in real time illustrated on real datasets.
Proofs, treatment of the round trip situation, calibration details and conditional distance specification are content of the Appendix.

\section{Limit Order Book, Liquidity Costs and Imbalance}
The ask price is denoted by $p$ and the limit order book on the ask side by $\bar{v} = (\bar{v}_{0},\bar{v}_1, \ldots)$, that is, $\bar{v}_0$ is the volume posted at ask price $p$, $\bar{v}_1$ the volume posted at $p+\delta $, etc. where $\delta$ is the tick size.
We denote by $\bar{v}^- = (\bar{v}_0^-, \bar{v}^-_{1}, \ldots)$ the limit order book on the bid side.\footnote{That is $\bar{v}^-_0$ is the volume posted at bid price $p^-<p$, $\bar{v}^-_1$ the volume posted at $p^- - \delta$, etc.}

Given a limit order book inventory $\bar{v}$ on the ask side, we define for an amount $H\geq 0$ of shares the function
\begin{equation*}
    F(H)  := \inf \left\{ x \in \mathbb{N}_0\colon \sum_{k=0}^{x} \bar{v}_k \geq H \right\}
\end{equation*}
which represents how many positive price tick deviation an order of size $H$ generates.
Given an amount of shares $H$ to buy, a bid price $p$ and an ask limit order $\bar{v}$, the resulting costs of the market order is
\begin{equation*}
    pH + \sum_{k=0}^{F(H)} k \delta \bar{v}_k - \delta F\left( H \right)\left( \sum_{k=0}^{F(H)} \bar{v}_k - H \right) = pH +\delta G(H)
\end{equation*}
The term $G$ on the right hand side represents the liquidity costs depending only on $\bar{v}$.
\begin{remark}
    Throughout the theoretical part of this work we assume that the limit order book is blocked shaped with an amount $a>0$ of shares at each price level of the ask side.
    We then get the continuous approximation 
    \begin{equation*}
        F(H) \approx \frac{H}{a} \quad \text{and} \quad G(H)\approx \frac{H^2}{2 a}
    \end{equation*}
\end{remark}
As for the imbalance of the limit order book, measure of the difference between offer and demand, we proceed as follows.
Let $w_0$, $w_1$, \ldots with $\sum w_k =1$ and $w_k\geq 0$, a weight for each tick level $k$, and a limit order book inventory $\bar{v}^-$, $\bar{v}$ on the bid and ask respectively, we denote by
\begin{align*}
    \bar{B} & = \sum w_k \bar{v}^-_k = \langle w, \bar{v}^-\rangle &
    \bar{A} & = \sum w_k \bar{v}_k = \langle w, \bar{v}\rangle 
\end{align*}
the weighted average bid and ask volumes.
We define the imbalance as
\begin{equation*}
    \bar{\imath} : = \frac{B}{B+A} \in (0,1)
\end{equation*}
\begin{remark}
    In a blocked shaped setting, if $b$ denotes the amount of orders on every price level on the bid side, we get
    \begin{equation*}
        \bar{\imath} = \frac{\sum w_k b}{\sum  w_k (a+b)} = \frac{b}{a+b}
    \end{equation*}
    which yields $b = a\bar{\imath}/(1-\bar{\imath})$.
\end{remark}
The price deviation in the next period can be triggered by two events.
The posting and cancellation of incoming limit orders as well as the posting of market orders.
We distinguish between both, since the former does not have an impact on the execution of existing limit orders while the latter has.
We generically denote by
\begin{align*}
    d p& = \{\ldots, d p_{-1}, d p_0, d p_{1}, \ldots\}  & \mu & = \sum x dp_x\\
    dq & = \{\ldots, d q_{-1}, d q_0, d q_{1}, \ldots\}  & \nu & = \sum ydp_y\\
\end{align*}
the distribution and mean, respectively, of the two possible price movement in the next period.
For the sake of simplicity, we assume that the imbalance does not have an impact on incoming market orders and that the price deviation with respect to the market orders is neutral, that is $\nu =0$.
We furthermore assume that they are independent of each others.\footnote{The subsequent theoretical study adapt to eventual joint distribution of price movement due to limit and market orders also jointly dependent on the imbalance. The exposition of which is no longer explicit but can be solved numerically.}
To reflect the fact that the imbalance, as an indicator of the offer and demand on the market, has an impact on the market makers, we consider a parametrization $\imath \mapsto dp(\imath)$ of the price movement driven by limit orders as a function of the imbalance $\imath$.
Since the imbalance moves between $0$ and $1$, we assume that the distribution $dp$ moves as a convex combination of $\imath$ between the distribution $dp^-$ -- distribution when the imbalance is close to $0$, that is highly skewed to the left -- and the distribution $dp^+$ -- distribution when the imbalance is close to $1$, that is highly skewed to the right.
Mathematically:
\begin{equation*}
    dp(\imath) = \imath dp^+ + (1-\imath) dp^-
\end{equation*}
From the skewness assumptions and symmetry of the imbalance indicator, we assume that
\begin{equation*}
    dp^+_x \geq dp^-_x \quad \text{for every }x\geq 0 \quad \text{and}\quad dp^+_x = dp^-_{-x} \quad \text{for every }x
\end{equation*}
which implies that 
\begin{equation*}
    \mu(\imath) := \sum x dp_x(\imath) = (2\imath -1) \sum x dp^+_x = \mu^+(2\imath-1)
\end{equation*}
By assumption, $\mu^+$ is positive, showing that $\mu(i)$ moves between $-\mu^+$ and $\mu^+$ and is equal to $0$ for an imbalance of $1/2$ when offers equal demand.

\section{Spoofing strategy}
Suppose that at a given time we are given an ask price $p$ and a limit order book inventory $(\bar{v}^-, \bar{v})$.
A trader willing to buy an amount $H$ of shares faces the following three options.

\begin{itemize}[fullwidth]
    \item \textbf{Immediate market order:} for a total costs of 
        \begin{equation*}
            pH + \delta G(H)
        \end{equation*}
    \item \textbf{Delayed market order:} for an average total cost of 
        \begin{equation*}
            \sum_{x, y} \left[\left( p+\delta(x+y) \right) H +G(H)\right] dp_x\left( \bar{\imath} \right)dq_y = pH + \delta \left(G(H) + H \mu^+\left( 2\bar{\imath}-1 \right)\right)
        \end{equation*}
        Clearly if $\bar{\imath}<1/2$, then this second option is better than a direct buy.
    \item \textbf{Spoofing and delayed market order:}
        Book first an ask limit order $v$ at a depth $k$ in $\{0, 1,\ldots, N\}$ on top of the ask limit order book $\bar{v}_{k}$ to increase the liquidity on the ask side and signal a surge in supply to the market.
        In the next period the price deviates from $p$ to $p+\delta (x+y)$ and two situations may happen:
        \begin{itemize}
            \item $y\leq k$: no market order of sufficient magnitude hits the limit order book and therefore this limit order is not executed against an incoming market order.
                The previous limit order is canceled and the amount $H$ of shares is acquired for a cost of
                \begin{equation*}
                    \left( p+\delta (x+y) \right)H + \delta G(H)
                \end{equation*}
            \item $y>k$: the limit order is executed against an incoming market order at a price level $p+\delta(x+k)$.
                The new objective moves to $H+v$ resulting in a net cost of
                \begin{multline*}
                    \left( p+ \delta(x+y) \right)\left( H+v \right) + \delta G\left( H+v \right) - \left( p+\delta (x+k) \right)v \\
                    = \left( p+x \delta \right)H + \delta G(H+v) + \delta (y-k) v
                \end{multline*}
        \end{itemize}
        It follows that the spoofing net cost for a price deviation of $p+\delta (x+y)$ is given by
        \begin{equation}\label{eq:spoofing_costs}
            C_k(v,x,y) = (p+(x+y)\delta)H + \delta G\left(H + v 1_{\{y>k\}}\right) + \delta(y-k) v 1_{\{y>k\}}
        \end{equation}
        However, the posting of the selling limit order modifies the imbalance from $\bar{\imath}$ to 
        \begin{equation*}
            \imath_k(v) : = \frac{B}{A+B + w_k v}
        \end{equation*}
        In other words, the imbalance will move downwards, shifting the distribution $dp$ to more favorable outcomes.
        Since we assume that $\nu=0$, it follows that the average net costs are given by
        \begin{multline*}
            C_k(v):=\sum_{x,y} C_k(v,x, y) dp_x(\imath_k(v)) dq_{y}\\
            = \sum_{x} (p+\delta x)H dp_{x}(\imath_k(v)) + \delta(1-Q_k) G(H) + \delta Q_k G(H+v) + \delta v \sum_{y\geq k+1}(y-k)dq_y\\
            = pH + \underbrace{\delta (1-Q_k) G(H)}_{\text{Cost for optimal situation}}  + \underbrace{\delta H \mu^+ \left( 2\imath_k(v)-1 \right)}_{\text{Spoofing impact}} + \underbrace{\delta Q_k G(H+v)+ \delta v \nu_k}_{\text{Cost for being caught wrong way}} 
        \end{multline*}
        where
        \begin{equation*}
            Q_k = \sum_{y \geq k+1} dq_y \quad \text{and}\quad \nu_k =\sum_{y\geq k+1}(y-k)dq_y
        \end{equation*}
        Note that this cost functional is convex in $v$ since $G$ and $\imath_k$ are convex functions.
        Note also that in order to take advantage of this spoofing impact, it is necessary to drive the imbalance $\imath_k(v)$ below $1/2$.
\end{itemize}
\begin{remark}
    Note that we implicitly assume that the spoofing only happens at a given depth $k$.
    It is possible to spoof simultaneously at different depths resulting in a slightly more complex cost function that can be solved numerically.
    The conclusions do not change qualitatively and we use the more general multi-depth spoofing for the analysis of data in the subsequent sections.
\end{remark}
\subsection{Existence of Spoofing Manipulation}
The question is whether it is possible to push the imbalance as much as possible to $0$ in order to offset the costs of posting selling orders, they being executed and paying the liquidity costs of buying them back.
\begin{definition}
    We say that the limit order book $(\bar{v}^-, \bar{v})$ admits a (market taker) spoofing manipulation if there exists $v >0$ and $k \in \{0, 1, \ldots, N\}$ such that
    \begin{equation}\label{eq:spoofing}
        \begin{cases}
            \displaystyle C_k(v) < pH +\delta \left(G(H) +  \mu^+ H\left(2\bar{\imath} -1\right)\right) & \text{if } \ \bar{\imath}\leq 1/2\\
            \\
            \displaystyle C_k(v) < pH +\delta G(H)  & \text{if } \ \bar{\imath}> 1/2
        \end{cases}
    \end{equation}
\end{definition}
According to the average costs of spoofing, these two inequalities turn into
\begin{equation}\label{eq:spoofing02}
    Q_k\left[G(H+v) - G(H) \right]< 2\mu^+ H\left(\bar{\imath}\wedge \frac{1}{2} -\imath_k(v)\right) - v \nu_k
\end{equation}

The following results concerns the existence of spoofing manipulation in a blocked shaped setting where the volume on the ask side of the limit order book is $a$ everywhere.
For scaling reasons let $H=\rho a$, where $\rho$ represents the ratio of the shares to purchase to the depth of the limit order book.
Our first result concerns the existence of spoofing manipulation.
\begin{proposition}\label{prop:existence_spoofing}
    In a block shaped setting, where the volume on the ask side of the limit order book is $a$ everywhere.
    The following assertions hold
    \begin{itemize}[fullwidth]
        \item If $\bar{\imath}\leq 1/2$, the limit order book admits no spoofing manipulation if and only if \eqref{eq:NS} holds;
        \item If $\bar{\imath}> 1/2$, the limit order book admits no spoofing manipulation if \eqref{eq:NS} holds.
    \end{itemize}
    Where $H=\rho a$ and
    \begin{equation}\label{eq:NS}
        2 \rho \mu^+\left( 1-\bar{\imath} \right)\bar{\imath} w_k \leq Q_k \rho +\nu_k \quad \text{for all }k
    \end{equation}
\end{proposition}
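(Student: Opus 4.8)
The plan is to fix a depth $k\in\{0,\ldots,N\}$ and reduce condition \eqref{eq:spoofing02} to a one–dimensional analysis of the scalar function
\begin{equation*}
    f_k(v) := Q_k\bigl[G(H+v)-G(H)\bigr] + v\,\nu_k - 2\mu^+ H\Bigl(\bar{\imath}\wedge\tfrac12 - \imath_k(v)\Bigr),\qquad v\ge 0 .
\end{equation*}
By \eqref{eq:spoofing02}, the limit order book admits a spoofing manipulation at depth $k$ exactly when $f_k(v)<0$ for some $v>0$, and admits no spoofing manipulation at all if and only if $f_k(v)\ge 0$ for every $v>0$ and every $k$. The first step is to note that $f_k$ is convex on $[0,\infty)$: in the block–shaped setting $G(H+v)=(H+v)^2/(2a)$ is convex, $v\nu_k$ is affine, $\imath_k(v)=B/(A+B+w_kv)$ is the reciprocal of a positive affine function hence convex (and merely constant when $w_k=0$), and the remaining term $-2\mu^+H(\bar{\imath}\wedge\frac12)$ is constant. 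Convexity is what reduces the whole question to the behaviour of $f_k$ near $v=0$.

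Next I would compute $f_k(0)$ and the right derivative $f_k'(0^+)$. Since $\imath_k(0)=\bar{\imath}$ we get $f_k(0)=-2\mu^+H(\bar{\imath}\wedge\frac12-\bar{\imath})$, which vanishes when $\bar{\imath}\le 1/2$ and is strictly positive when $\bar{\imath}>1/2$. Differentiating, $f_k'(v)=Q_k\frac{H+v}{a}+\nu_k-\frac{2\mu^+HBw_k}{(A+B+w_kv)^2}$; evaluating at $v=0$ and substituting $H=\rho a$, $B=b$, $A+B=a+b$, together with $b/(a+b)=\bar{\imath}$ and $a/(a+b)=1-\bar{\imath}$, yields
\begin{equation*}
    f_k'(0^+) = Q_k\rho + \nu_k - 2\rho\mu^+(1-\bar{\imath})\bar{\imath}\,w_k ,
\end{equation*}
so that $f_k'(0^+)\ge 0$ is precisely \eqref{eq:NS} at level $k$.

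The conclusion then follows from the elementary fact that a convex function on $[0,\infty)$ is nonnegative there as soon as its value and right derivative at $0$ are nonnegative. If $\bar{\imath}\le 1/2$, then $f_k(0)=0$, so $f_k\ge0$ on $(0,\infty)$ if and only if $f_k'(0^+)\ge0$, i.e.\ if and only if \eqref{eq:NS} holds at level $k$; intersecting over the finitely many $k\in\{0,\ldots,N\}$ gives the stated equivalence. If $\bar{\imath}>1/2$ and \eqref{eq:NS} holds for every $k$, then each $f_k$ is convex with $f_k'(0^+)\ge0$, hence nondecreasing, hence $f_k(v)\ge f_k(0)>0$ for all $v>0$, which is the stated implication. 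Note that here only an implication is available: the strict positivity of $f_k(0)$ leaves room for $f_k$ to remain nonnegative even when $f_k'(0^+)<0$, so \eqref{eq:NS} is sufficient but not necessary.

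The computations are routine once the setup is in place; the points that require care are the genuine convexity of $f_k$ (in particular the degenerate case $w_k=0$, where \eqref{eq:NS} reduces to the trivially true $0\le Q_k\rho+\nu_k$), the fact that the no–spoofing problem decouples across depths so that the global condition is the conjunction of the one–dimensional ones, and keeping straight which branch of \eqref{eq:spoofing} is active — this last point being exactly why the $\bar{\imath}>1/2$ case produces only a sufficient condition rather than an equivalence. I expect this branching, rather than any single estimate, to be the main source of friction in writing the argument cleanly.
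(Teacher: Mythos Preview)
Your proposal is correct and follows essentially the same approach as the paper: define the scalar function $f_k$, observe that it is convex (the paper phrases this as the gradient being monotone), compute $f_k(0)$ and $f_k'(0^+)$, and conclude via the sign of the right derivative at the origin in each of the two cases $\bar{\imath}\le 1/2$ and $\bar{\imath}>1/2$. Your write-up is slightly tidier in that it unifies both branches through $\bar{\imath}\wedge\tfrac12$ and explicitly notes why the $\bar{\imath}>1/2$ case yields only a sufficient condition, but the mathematical content is the same.
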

For the proof, see Appendix \ref{appendix:proofs}.
From this proposition, we deduce that price manipulation is more likely to occur
\begin{itemize}
    \item if $Q_k$ is small -- and as a byproduct $\nu_k$.
        If the probability to get a spoofing order executed is small, there is relatively no downsize at spoofing.
    \item if $\bar{\imath}$ is close to $1/2$.
        If the imbalance is close to $1/2$, then $\bar{\imath}(1-\bar{\imath})$ is maximum.
        The impact of moving the price in ones favor is maximal there.
    \item if $\mu^+$ is large:
        $\mu^+$ represents the mean deviation sensitivity as a function of the imbalance.
        The more sensitive the price movement is with respect to the imbalance, the more likely spoofing strategies may occur.
    \item If $w_k$ is large:
        $w_k$ represents the relative impact at tick level $k$ of a spoofing volume to the imbalance.
        If one of $w_k$ is large with respect to the corresponding $Q_k$, then spoofing is more likely to occur there.
    \item if $\rho$ is relatively large.
        If the amount of order to buy relative to the overall offer is very large, spoofing is more likely to happen.
\end{itemize}
Figure \ref{fig:condition_function} represents the existence of spoofing condition \ref{eq:NS} in terms of the initial imbalance $\bar{\imath}$ with varying market parameters.

\begin{figure}[H]
    \centering
    \includegraphics[width=\textwidth,keepaspectratio]{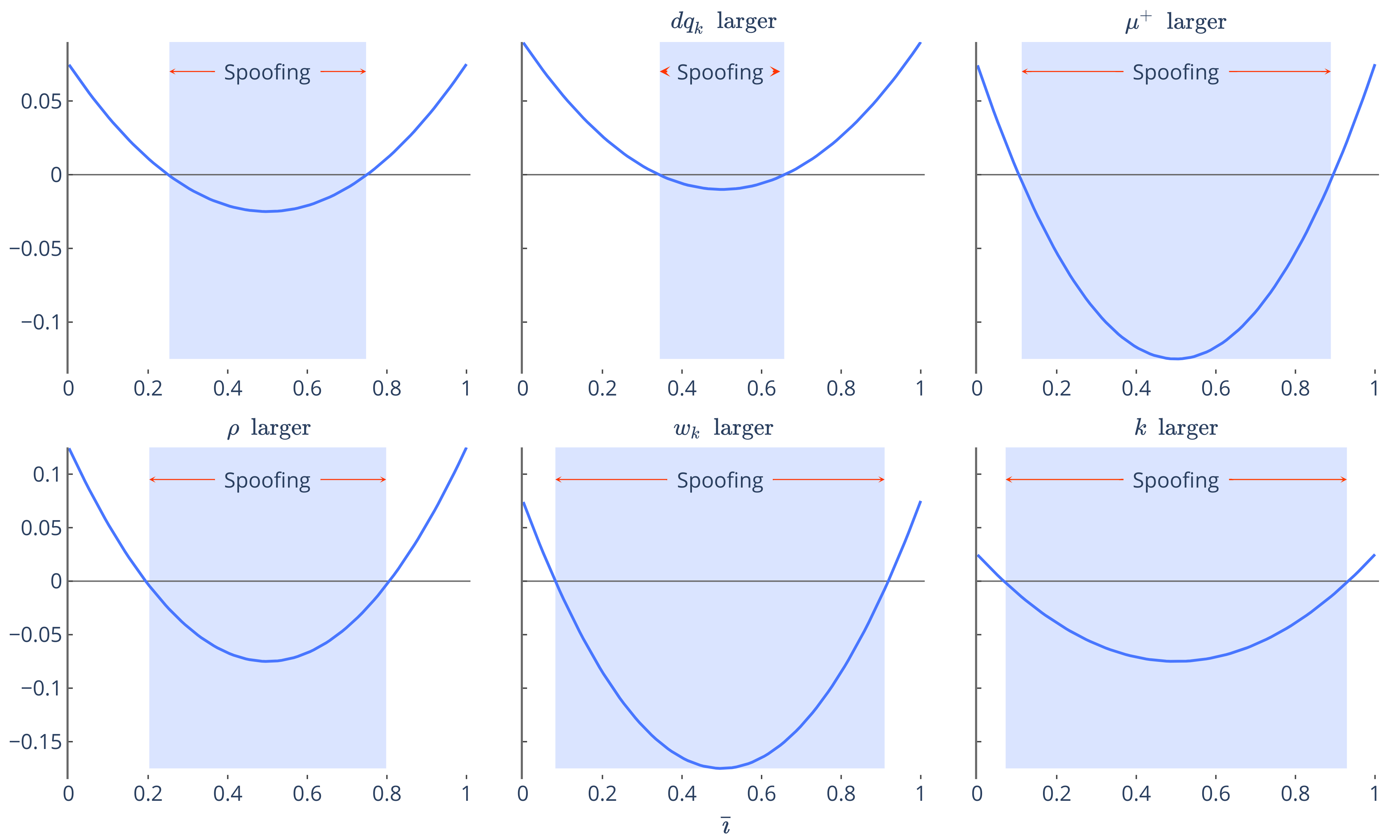}
    \caption{Spoofing condition (\ref{eq:NS}) as a function of $\bar{\imath}$  and in $(a)$
    $\mu^ + =1, \rho = 1, k = 3, w_k =0.2,  dq_y = 0.025 $ for all $y \geq k $.
    One parameter is increased each time with respect to $(a)$ where 
    $(b)$: $dq_y = 0.03 $ for all $y \geq k $; 
    $(c)$: $\mu^+ =2$;
    $(d)$: $\rho =2$;
    $(e)$: $w_k = 0.5$;
    $(f)$: $k=4$.}
    \label{fig:condition_function}
\end{figure}  

\subsection{Optimal Spoofing and Resulting Imbalance Impact}
Let us now address the problem of finding the optimal spoofing strategy.
In particular as a function of the depth at which the spoofing order is placed.
\begin{proposition}\label{prop:optimal_spoofing}
    The optimal spoofing volume $v_{spoof}$ at a given level $k$ and resulting imbalance $\imath_{spoof}$ -- adopting the notations $w := w_k$, $Q := Q_k$ and $\nu:=\nu_k$ -- are given by:
    \begin{equation*}
        v_{spoof} = \frac{a}{Q}\left[2\rho w \mu^+ \frac{1-\bar{\imath}}{\bar{\imath}} \imath_{spoof}^2 - \left( Q\rho +\nu \right)\right]^+
    \end{equation*}
    where $\imath_{spoof}$ is the unique cubic root solution in $(0, \bar{\imath}]$ of
    \begin{equation*}
        \frac{\bar{\imath}}{\imath} = 1+\frac{(1-\bar{\imath})w}{Q}\left[2\rho w \mu^+ \frac{1-\bar{\imath}}{\bar{\imath}} \imath^2 - \left( Q\rho +\nu \right)\right]^+
    \end{equation*}
\end{proposition}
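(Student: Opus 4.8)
The plan is to treat this as a one–dimensional convex minimisation and then read off the optimiser. First I would specialise to the block–shaped book of the preceding remark: $G(H)=H^2/(2a)$, $G(H+v)=(H+v)^2/(2a)$, $H=\rho a$, and, since $\bar\imath=B/(A+B)$ with $A=\langle w,\bar v\rangle=a$ and $B=\langle w,\bar v^-\rangle=a\bar\imath/(1-\bar\imath)$, the imbalance after posting $v$ at level $k$ becomes
\[
    \imath_k(v)=\frac{B}{A+B+w_k v}=\frac{a\bar\imath}{a+(1-\bar\imath)w_k v},
\]
a strictly decreasing bijection of $[0,\infty)$ onto $(0,\bar\imath]$ with inverse $v=\dfrac{a(\bar\imath-\imath)}{(1-\bar\imath)\,w_k\,\imath}$, equivalently $w_k v/a=(\bar\imath/\imath-1)/(1-\bar\imath)$; this change of variable is what turns the optimal $v_{spoof}$ into the optimal $\imath_{spoof}$ at the end.

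Next I would minimise. Dropping the $v$–independent constant $pH+\delta(1-Q_k)G(H)$ and the factor $\delta$, the function to minimise over $v\ge0$ is
\[
    f(v)=H\mu^+\big(2\imath_k(v)-1\big)+Q_kG(H+v)+v\nu_k ,
\]
which is convex by the remark after \eqref{eq:spoofing_costs}. From $\imath_k'(v)=-\frac{(1-\bar\imath)w_k}{a\bar\imath}\,\imath_k(v)^2$ and $H=\rho a$ one gets
\[
    f'(v)=-2\rho\mu^+\frac{(1-\bar\imath)w_k}{\bar\imath}\,\imath_k(v)^2+Q_k\Big(\rho+\frac{v}{a}\Big)+\nu_k .
\]
By convexity the minimiser is $v_{spoof}=0$ when $f'(0)\ge0$ and otherwise the unique positive root of $f'(v)=0$; solving $f'(v)=0$ for $v$ and taking positive parts packages both cases into
\[
    v_{spoof}=\frac{a}{Q_k}\Big[\,2\rho w_k\mu^+\tfrac{1-\bar\imath}{\bar\imath}\,\imath_k(v_{spoof})^2-(Q_k\rho+\nu_k)\,\Big]^+ .
\]
A direct check shows $f'(0)\ge0$ is exactly \eqref{eq:NS} at level $k$, so $v_{spoof}=0$ precisely when Proposition \ref{prop:existence_spoofing} forbids manipulation. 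Inserting $\imath_{spoof}=\imath_k(v_{spoof})$ and the inversion formula $w_k v_{spoof}/a=(\bar\imath/\imath_{spoof}-1)/(1-\bar\imath)$ then produces the stated fixed–point equation.

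It remains to establish existence and uniqueness of the root of that equation in $(0,\bar\imath]$. Write $h(\imath)=\bar\imath/\imath$ and $g(\imath)$ for the right–hand side: $h$ is strictly decreasing with $h(\bar\imath)=1$ and $h(0^+)=+\infty$, while $g\ge1$ is non–decreasing (the bracket $2\rho w_k\mu^+\tfrac{1-\bar\imath}{\bar\imath}\imath^2-(Q_k\rho+\nu_k)$ is increasing in $\imath$ on $(0,\bar\imath]$, and so is its positive part); hence $h-g$ is strictly decreasing, is positive near $0$, and satisfies $h(\bar\imath)-g(\bar\imath)\le0$, giving exactly one root in $(0,\bar\imath]$, equal to $\bar\imath$ exactly when \eqref{eq:NS} holds. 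On the region where the bracket is positive, multiplying through by $\imath$ turns the equation into the depressed cubic
\[
    \frac{2\rho w_k^2\mu^+(1-\bar\imath)^2}{Q_k\bar\imath}\,\imath^3+\Big(1-\tfrac{(1-\bar\imath)w_k(Q_k\rho+\nu_k)}{Q_k}\Big)\imath-\bar\imath=0 ,
\]
whose leading coefficient is positive and whose constant term is negative; since its derivative has at most one positive zero, the polynomial increases from $-\bar\imath$ at $\imath=0$ to $+\infty$ with a single sign change, so there is exactly one positive real root — the ``cubic root solution'' of the statement — and by the monotonicity argument it lies in $(0,\bar\imath)$. I expect the only genuinely delicate point to be the bookkeeping around the kink of $[\,\cdot\,]^+$, i.e. verifying that the first–order condition $f'(0)\ge0$ coincides with \eqref{eq:NS} and that the positive–part formula for $v_{spoof}$ is consistent with it; everything else is routine differentiation together with the monotonicity/intermediate–value argument above.
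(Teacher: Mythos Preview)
Your proposal is correct and follows essentially the same route as the paper: specialise to the block-shaped book, take the first-order condition of the convex objective, solve it for $v$ in terms of $\imath=\imath_k(v)$ with the positive-part formula, substitute back into the definition of $\imath_k$ to obtain the fixed-point equation, and conclude uniqueness from the opposite monotonicity of the two sides on $(0,\bar\imath]$. The only cosmetic differences are that the paper phrases the boundary case via a Lagrange multiplier rather than your direct convexity/casework, and that your write-up goes further in spelling out the cubic and the link to \eqref{eq:NS}; neither changes the substance.
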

For the proof, see Appendix \ref{appendix:proofs}.
Though the solution is implicit, we can inspect the spoofing behavior as a function of the distance to the top of the limit order book.
Note first that $\imath_{spoof} = \bar{\imath}$ if and only if $2\rho w\mu^+ (1-\bar{\imath}) \bar{\imath}  \leq  Q\rho +\nu $ which results into $v_{spoof} = 0$.
This coincide with the no spoofing condition of the previous proposition.
We are interested in the relative spoofing size as a function of these parameters.
From the definition of the imbalance, $\imath(v)$ increases if and only if $v$ decreases, so we get more spoofing volume as $\imath_{spoof}$ gets smaller.
Now from the implicit function it holds that
\begin{equation*}
    \frac{1}{\imath} = \frac{1}{\bar{ \imath}} + \frac{w}{Q} \frac{1-\bar{\imath}}{\bar{\imath}} \left[ 2\rho w\mu^+ \frac{1-\bar{\imath}}{\bar{\imath}} \imath^2  - \left( Q\rho +\nu \right) \right]^+=:f\left(w, \mu^+, \nu, Q, \bar{\imath}, \rho,\imath \right)
\end{equation*}
where the function $f$ is an increasing function of $\imath$ greater than $1 / \bar{\imath}$.
\begin{itemize}
    \item Since $f$ is increasing as a function of $w$ and $\mu^+$, it follows that $\imath_{spoof}$ is decreasing as a function of $w$ and $\mu^+$.
        Hence, spoofing behavior increases as a function of the impact $w$ at level $k$ on the imbalance as well as a function of the overall sensitivity $\mu^+$ of the price movement with respect to the imbalance.
    \item Since $f$ is decreasing as a function of $Q$ and $\nu$, it follows that $\imath_{spoof}$ is increasing as a function of $Q$ and $\nu$.
        From an empirical viewpoint, $Q=Q_k$ as well as $\nu=\nu_k$ decreases as a function of the depth $k$.
        It follows that spoofing behavior is more likely to happen and increase deeper in the limit order book.
        However, this conclusion is short of the fact that local sensitivity of the imbalance $w=w_k$ also depends on the depth with an inverse impact.
        According to empirical analysis, it turns out that $w$ does not exert this decreasing behavior as a function of $k$ at least within a reasonable depth in the limit order book.
        It seems that spoofing behavior is more likely to happen at a reasonable distance from the top of the limit order book. 
\end{itemize}
The behavior of the resulting imbalance $\imath_{spoof}$ as a function of the initial imbalance $\bar{\imath}$ is more difficult to stress out.
We know that $\imath_{spoof} \leq \bar{\imath}$ and for the same reasons as before it is increasing as a function of $\bar{\imath}$.
The same holds for the dependence on the relative number of shares to purchase $\rho$.
Figure \ref{fig:optimal_spoofing} represents the curves of the spoofed imbalance $\imath_{spoof}(\bar{\imath})$ as a function of the initial imbalance $\bar{\imath}$ for different depths with varying market parameters.

\begin{figure}[H]
    \centering
    \includegraphics[width=\textwidth,keepaspectratio]{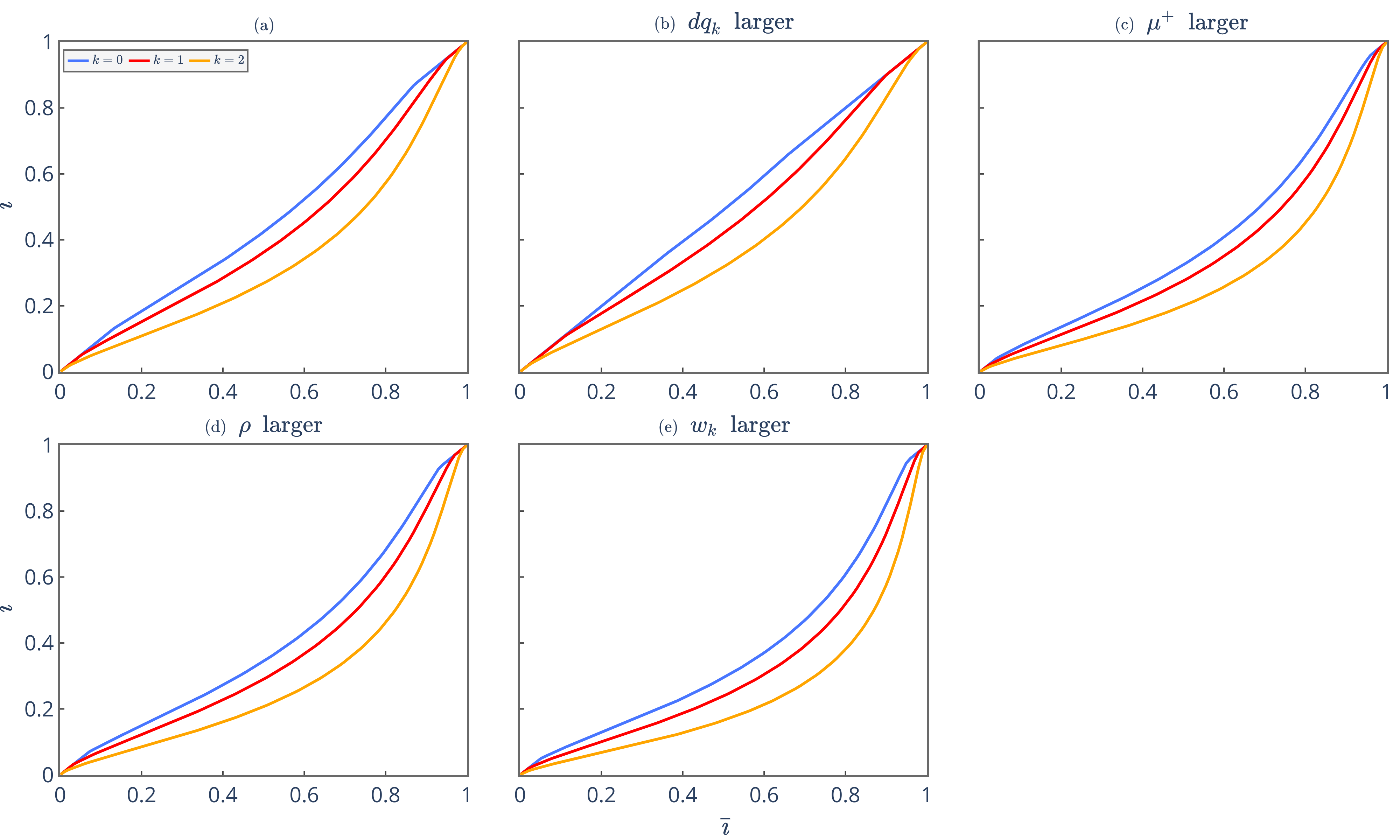}
    \caption{$\imath$ as a function of $\bar{\imath}$ and in $(a)$
    $\mu^ + =1, \rho = 1, w_k =0.2,  dq_y = 0.003 $ for all $y \geq k $.
    One parameter is increased each time with respect to $(a)$ where 
    $(b)$: $dq_y = 0.006 $ for all $y \geq k $; 
    $(c)$: $\mu^+ = 3$;
    $(d)$: $\rho =3$;
    $(e)$: $w_k = 0.5$;
    Blue line :$k=0$; red line: $k=2$; orange line: $k=4$.}
    \label{fig:optimal_spoofing}
\end{figure}

\begin{remark}
    Throughout, we mainly focus on the spoofing behavior from a market taker's viewpoint.
    As for a market maker, spoofing behavior might be rewarding as well.
    It turns out that the rewards from spoofing are intertwined with the ones from pure market making.
    The resulting impact on the imbalance is however quite similar, up to the fact that the bid ask spread is an additional factor in the spoofing opportunity, since the market maker will have to cross the spread.
    In Appendix \ref{appendix:round_trip}, we derive the spoofing strategy in the round trip context, discuss the spoofing impact on the imbalance and numerical analysis in the same context as the present situation.
\end{remark}

\section{Calibration}
According to this model, we calibrate the imbalance as well as $dp$ and $dq$ on real data provided by TMX.
These datasets consists of level 2 data from June to September 2017.
Among the 1500 available equities we selected 10, varying in company background, market capitalization as well as trading frequency.
The level 2 datasets include time, order price, volume, type,\footnote{Buy/sell; booked/cancelled/traded} order ID and counterpart order ID in case of a trade, see Figure \ref{fig:LOB_RAW}
\begin{figure}[H]
    \centering
    \includegraphics[width=\textwidth,keepaspectratio]{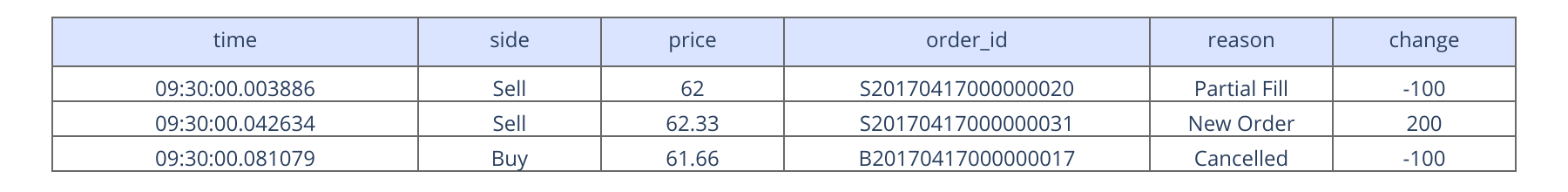}
    \caption{Original Level 2 dataset of stock AEM provided by TMX.}
    \label{fig:LOB_RAW}
\end{figure}
Since these are provided in diff form, therefore a cumulative aggregation allows to construct the full limit order book at any time as in Figure \ref{data_pic}
\begin{figure}[H]
    \centering
    \includegraphics[width=\textwidth,keepaspectratio]{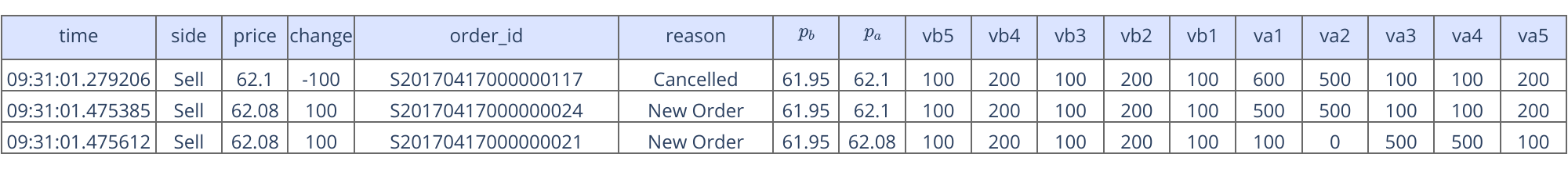}
    \caption{Generation of the full limit order book out of the Level 2 data.}
    \label{data_pic}
\end{figure}
This operation is computationally very intensive and therefore has been realized on a distributed data cluster of TMX with spark.

With the full limit order books at hand we divide the calibration into three steps:
\begin{itemize}
    \item Find a normalized sample frequency and choose a maximal depth for the support of the distributions $dp^\pm$ and $dq$;
    \item Calibrate the imbalance generically, that is, as a function of the weights $w=(w_1, w_2, \ldots)$;
    \item Estimate $dq$, $dp^\pm$ and weights $w=(w_1, w_2, \ldots)$.
\end{itemize}
The sample frequency should be large enough such that there exists enough variance in price change, see Figure \ref{AEM_price_change}
\begin{figure}[H]
    \centering
    \includegraphics[width=\textwidth,keepaspectratio]{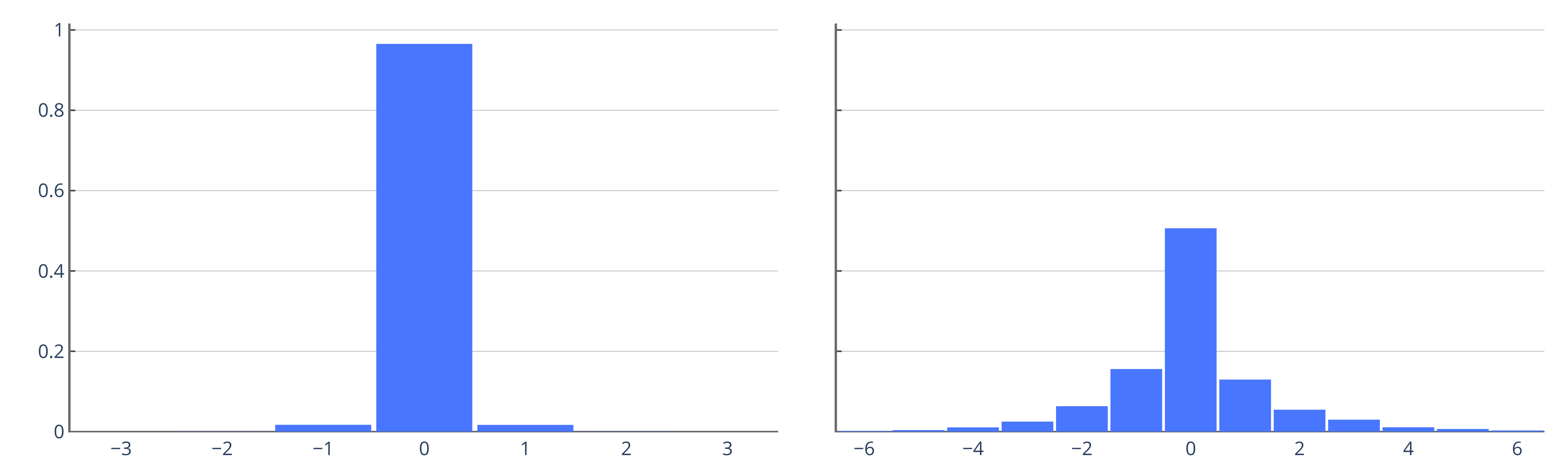}
    \caption{Left panel: Histogram of original AEM price change. Right panel: Histogram of AEM price change after sampling.}
    \label{AEM_price_change}
\end{figure}
To compare across markets with different trading activity---and eventually time during the day---we fix a target variance of $\sigma^2$ for the price movement and select the optimal frequency $f$ for each stock as to minimize the square distance between $\sigma_f$ and $\sigma$.
For a target variance $\sigma^2 =2$, Table \ref{table:stats_stocks} is the sample frequency for different stocks with a summary statistics of the average volume and arrival rate for Market/Limit Orders on the bid and ask side.
As for the maximal depth for the support of the distribution, we take the 99\% quantile of empirical sampled price change distribution.
The depths are around 4 since we use the same $\sigma^2 = 2$ to determine the sampling frequency. The sampling frequency is related to how fast limit orders arrive, not market orders. For most stocks, $f$ is small when the arrival rate of limit orders is high.
\begin{table}[H]
    \begin{center}
        \resizebox{\columnwidth}{!}{
            \begin{tabular}{@{}crcrrcrrcrrcrr@{}}
                \toprule
                \multirow{3}{*}{Stock} & \multirow{3}{*}{$f$ } & \multirow{3}{*}{Depth} 
                                       & \multicolumn{5}{c}{Market Orders} & & \multicolumn{5}{c}{Limit Orders}  \\
                                       \cmidrule{4-8} \cmidrule{10-14}
                                       & & & \multicolumn{2}{c}{Buy} & & \multicolumn{2}{c}{Sell} & & \multicolumn{2}{c}{Buy} & & \multicolumn{2}{c}{Sell} \\
                                       \cmidrule{4-5} \cmidrule{7-8} \cmidrule{10-11} \cmidrule{13-14}
                                       & & & \multicolumn{1}{c}{Vol} & \multicolumn{1}{c}{Rate} & & \multicolumn{1}{c}{Vol} &  \multicolumn{1}{c}{Rate} & & \multicolumn{1}{c}{Vol} & \multicolumn{1}{c}{Rate} & & \multicolumn{1}{c}{Vol} &  \multicolumn{1}{c}{Rate}  \\
                                       \midrule
                AEM & 4  & 4 & 146 & 0.072 &  & 144 & 0.065 &  & 144  & 5.145 &  & 142  & 5.232    \\
                BB  & 38 & 4 & 567 & 0.107 &  & 613 & 0.085 &  & 3257 & 3.034 &  & 3245 & 3.008    \\
                BMO & 11 & 4 & 171 & 0.107 &  & 168 & 0.134 &  & 149  & 2.482 &  & 151  & 2.553    \\
                CNR & 6  & 4 & 141 & 0.104 &  & 134 & 0.1   &  & 138  & 2.397 &  & 141  & 2.399  \\
                CPG & 53 & 5 & 356 & 0.13  &  & 349 & 0.115 &  & 866  & 1.762 &  & 879  & 1.741     \\
                FNV & 3  & 5 & 123 & 0.059 &  & 121 & 0.056 &  & 126  & 2.881 &  & 140  & 2.794    \\
                FR  & 60 & 3 & 243 & 0.072 &  & 272 & 0.063 &  & 795  & 2.115 &  & 842  & 2.049   \\
                PPL & 26 & 4 & 142 & 0.101 &  & 152 & 0.11  &  & 165  & 2.31  &  & 183  & 2.469   \\
                TD  & 20 & 4 & 223 & 0.205 &  & 213 & 0.218 &  & 277  & 3.87  &  & 278  & 3.866    \\
                VET & 6  & 5 & 127 & 0.07  &  & 152 & 0.069 &  & 130  & 2.474 &  & 137  & 2.439   \\
                \bottomrule
            \end{tabular}
        }
        \caption{Stock data from June 5, 2017 to June 9, 2017. The Vol columns corresponds to the average volume of a single order during inspected time interval and the Rate columns corresponds to the number of orders per second.}
        \label{table:stats_stocks}
    \end{center}
\end{table} 

With the sampling frequency $f$ and depth $N$, we define the average imbalance at time $t$ as 
\begin{equation*}
    \hat{\imath}(w, t) 
    = \frac{
        \sum\limits_{k\leq N} \sum\limits_{t - f  \leq s<t}  \bar{v}^-_k(s) w_k
        }
        {\sum\limits_{k\leq N} \sum\limits_{t - f  \leq s<t}  \left(\bar{v}^-_k(s) + \bar{v}_k(s) \right) w_k
        }
\end{equation*}
which sums up order book volumes within a certain time interval weighted by time difference $\Delta s_i = s_{i+1} - s_{i}$.
The weighting parameter $w$ impacts the average imbalance distribution, see Figure \ref{BMO_skew_histo_diff}.
\begin{figure}[H]
    \centering 
    \includegraphics[width=0.8\linewidth]{./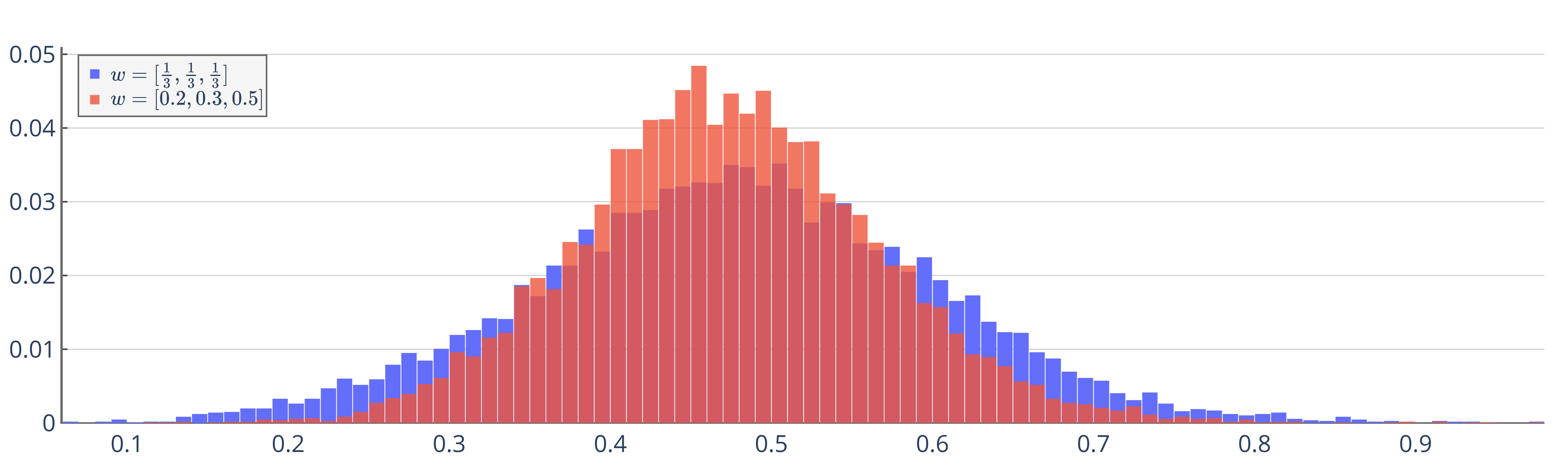}
    \caption{Distribution of the average imbalance for different weights for the stock BMO.} 
    \label{BMO_skew_histo_diff}
\end{figure} 
Nevertheless for each weight vector, the resulting distribution is close to a skewed normal distribution.
For a given weight $w$, using maximum likelihood, we fit the empirical distribution to the corresponding skew normal distribution $\mathcal{SN}(\alpha(w), \xi(w), \omega(w))$, the fit of which is particularly good, see figure \ref{BMO_skew_histo} for an example.
\begin{figure}[H]
    \centering 
    \includegraphics[width=0.8\linewidth]{./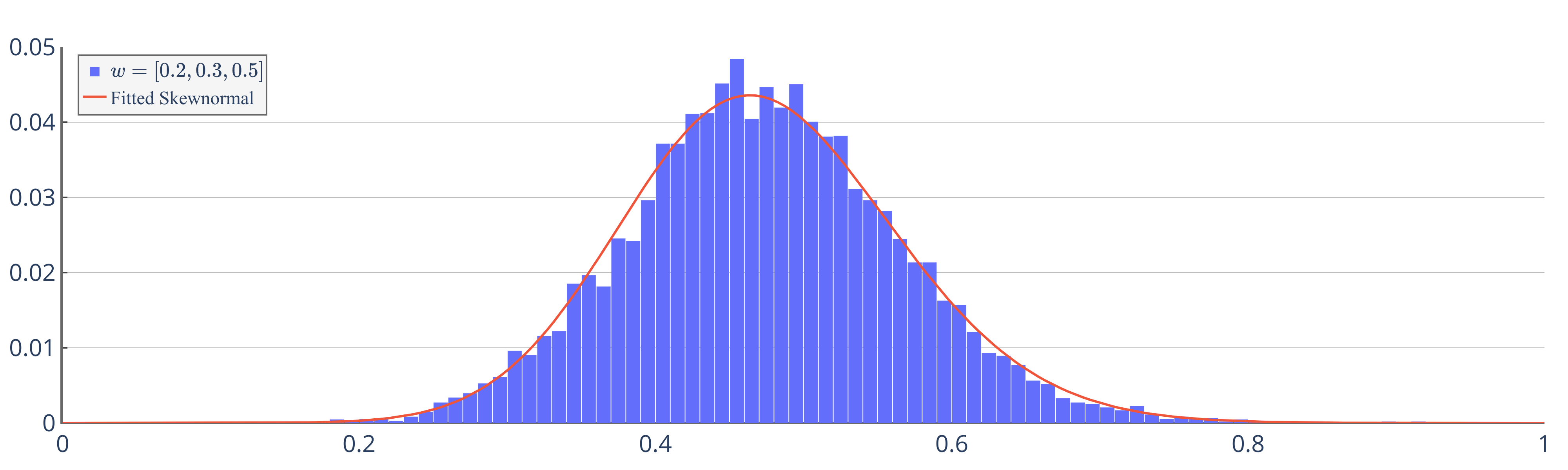}
    \caption{Histogram of BMO average imbalance and fitted skewnormal distribution: $w = [0.2,0.2,0.2,0.2,0.2]$.} 
    \label{BMO_skew_histo}
\end{figure} 

The third step is to determine $dq$, $dp^\pm$ with the optimal weights $w$.
As for $dq$, it is the probability that the price moves by $k$ ticks triggered by market orders.
Thus for each market order, compute
\begin{equation*}
    F(H) = \inf \{ x\in \mathbb{N}_0 : \sum_{k=0}^x v_k \geq H\}
\end{equation*}
where $H$ is the volume of the market order.
This represents exactly how many positive tick price deviation an order of size $H$ will produce.
We derive $dq$ from the empirical distribution.

As for $dp^\pm$ and $w$, a maximum likelihood estimation is implemented to solve
\begin{equation}
    dp^\ast , w^\ast = \argmin_{dp^+,w} \left[-\frac{1}{M} \sum^M_{m=1} \log p\left(x_m,  \hat{\imath}_m \right)\right]
\end{equation}
where $x_m$ is the empirical price change, $\hat{\imath}_m$ the average imbalance for a given weight $w$, 
\begin{equation*}
    p\left(x_m,  \hat{\imath}_m \right) = dp_{x_m}\left(\hat{\imath}_m \right) p(\hat{\imath}_m)
\end{equation*}
where $ dp_{x_m}\left(\hat{\imath}_m\right) = \hat{\imath}_m dp^+_{x_m} + \left( 1-\hat{\imath}_m \right) dp^-_{x_m}$ represents the conditional probability of price change equal to $x_m$ given $\hat{\imath}_m$, and $p(\hat{\imath}_m)$ is the density of the fitted skewnormal distribution evaluated at $\hat{\imath}_m$.

Figure \ref{stock_weights}, illustrating the value of the optimal weights $w$ for selected stocks, shows different patterns.
Overall, it turns out that the relative impact of the imbalance to the price distribution is more important away from the top of the limit order book.

We also performed this calibration procedure on stock BMO weekly from June 5th to June 30th, as well as for the first hour of trading monthly from June to September. Figure \ref{BMO_weights} provide the optimal weights in each case for BMO.
\begin{figure}[H]
    \centering
    \includegraphics[width=\textwidth,keepaspectratio]{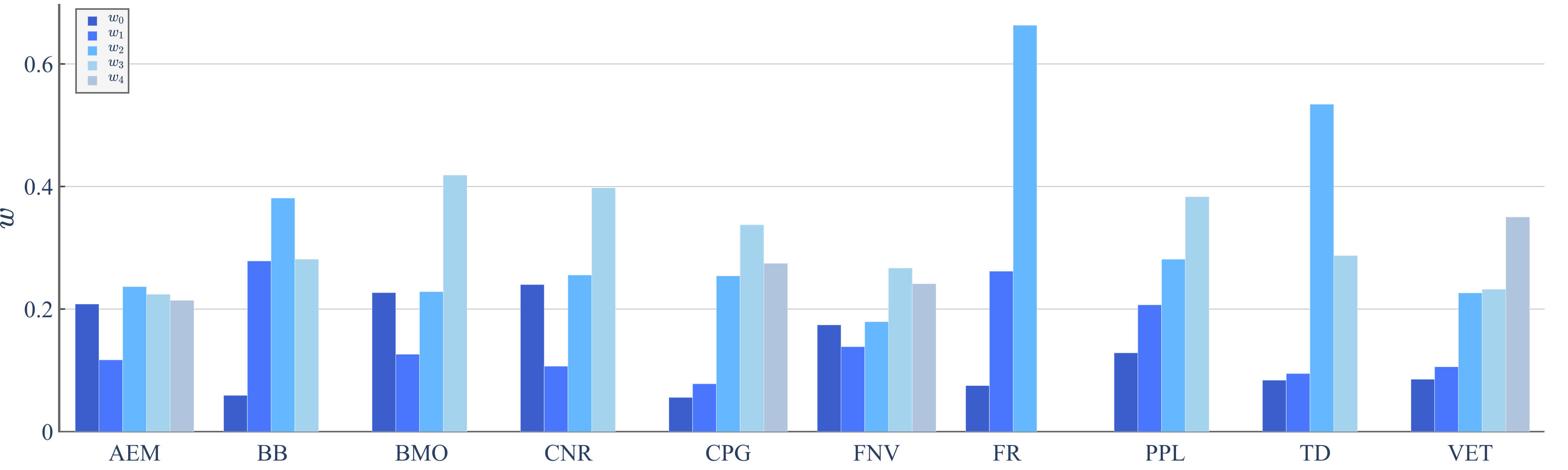}
    \caption{$w$ for stock AEM, BB, BMO, CNR, CPG, FNV, FR, PPL, TD, VET from June 5th, 2017 to June 9th, 2017.}  
    \label{stock_weights}
\end{figure}

\begin{figure}[H]
    \centering
    \includegraphics[width=\textwidth,keepaspectratio]{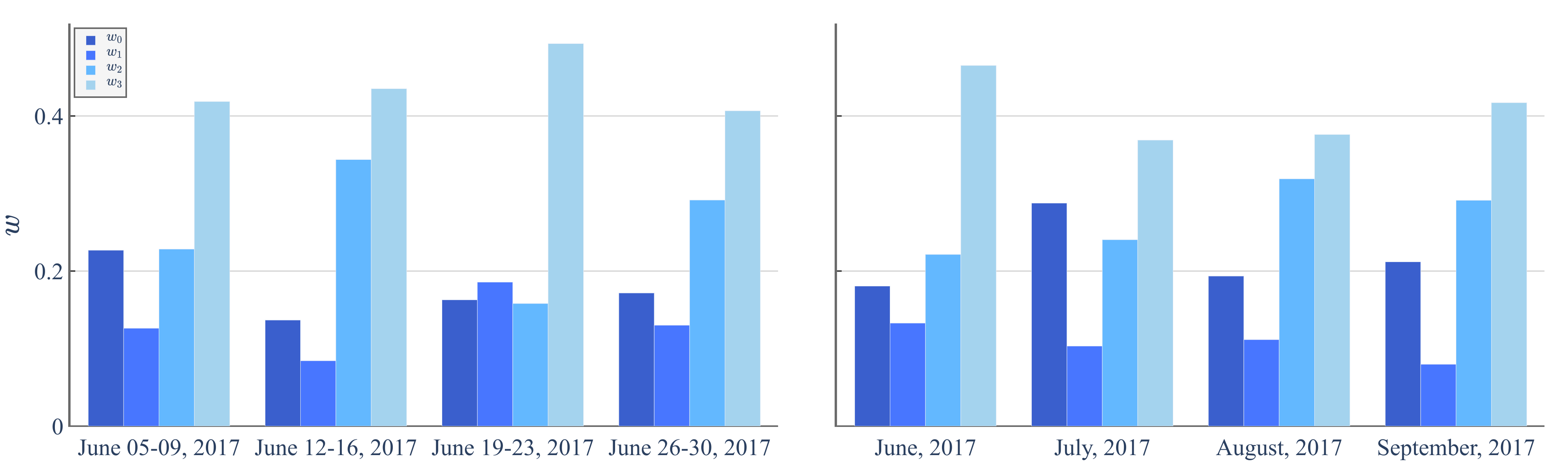}
    \caption{Left panel: $w$ for stock BMO each week in June 2017. Right panel: $w$ for stock BMO each month in June 2017, only using first hour trading data.}  
    \label{BMO_weights}
\end{figure}
Notice that for the first hour of trading the optimal weights are more consistent across time, but all show that the weight impact on the price movement happens deeper in the limit order book.

As for the corresponding $dp^+$ and $dq$, they are represented in Figure \ref{dp_dq} for stock BMO from June 5th, 2017 to June 9th, 2017.
As expected, $dp^+$, representing the price movement as the imbalance is large, is skewed to the right.
\begin{figure}[H]
    \centering
    \includegraphics[width=\textwidth,keepaspectratio]{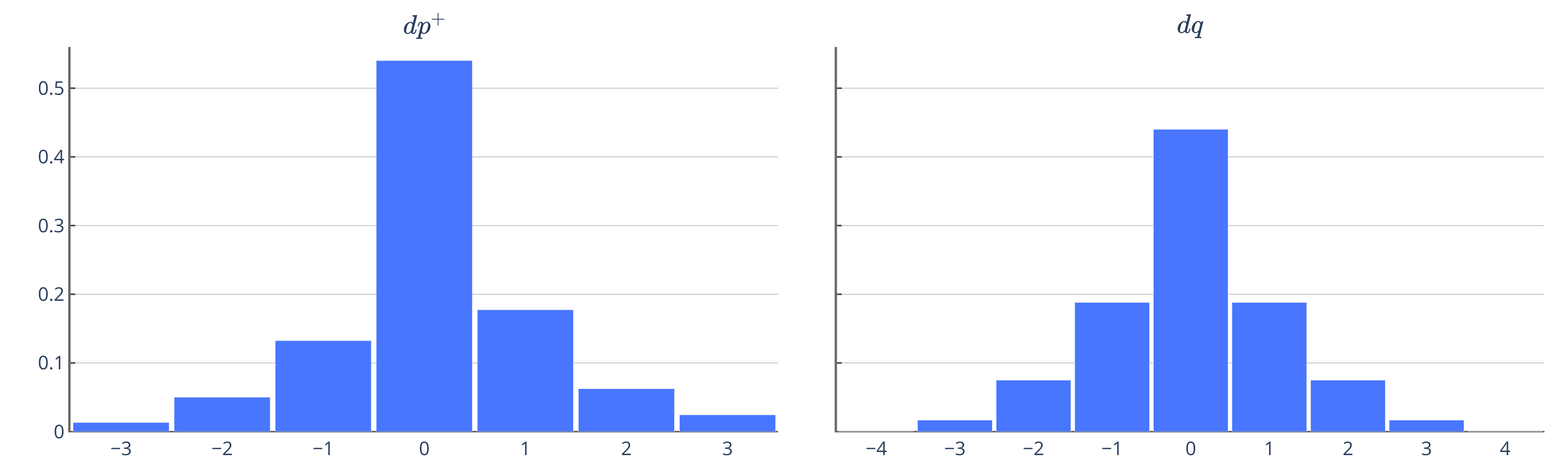}
    \caption{$dp^+$ and $dq$ for BMO from June 5th to June 9th.} 
    \label{dp_dq}
\end{figure}
Table \ref{table:moments} provides the moments of $dp^+$ for each stock -- in tick values.
Skewness shows how much $dp^+$ is skewed to right.
When it is large, $\mu^+$ is also large and spoofing has a larger impact according to theoretical part.
Except for CPG which is relatively small, all of the other stocks under study excerpt this pattern of right-skewness.
\begin{table}[H]
    \begin{center}
            \begin{tabular}{@{}lrrrr@{}}
                \toprule
                \multicolumn{1}{c}{Stock}                   & \multicolumn{1}{c}{$\mu^+$} & \multicolumn{1}{c}{Variance} & \multicolumn{1}{c}{Skewness} & \multicolumn{1}{c}{Kurtosis}  \\
                \midrule
                AEM & 0.411 & 0.881 & 1.175  & 4.46
                \\
                BB  & 0.467 & 0.766 & 0.957  & 3.786
                \\
                BMO & 0.103 & 1.087 & 0.095  & 4.263
                \\
                CNR & 0.398 & 0.636 & 1.447  & 4.928
                \\
                CPG & 0.100 & 1.171 & -0.068 & 5.275
                \\
                FNV & 0.404 & 0.850 & 2.038  & 7.169
                \\
                FR  & 0.209 & 1.033 & 0.532  & 2.203
                \\
                PPL & 0.076 & 1.177 & 0.645  & 3.83
                \\
                TD  & 0.118 & 1.389 & 0.633  & 4.471
                \\
                VET & 0.315 & 1.253 & 1.899  & 7.393
                \\
                \bottomrule
            \end{tabular}
        \caption{Moments of $dp^+$.}
        \label{table:moments}
    \end{center}
\end{table}

\section{Approaches to Spoofing Detection}
For reasons mentioned in the introduction, it is difficult from a regulatory viewpoint to figure out whether or not spoofing happened a-posteriori.
According to the theoretical part, the act of spoofing will influence the resulting imbalance.
However, to monitor the imbalance is akin to contemplate pure noise as shown in Figure \ref{instant_imbalance}.
\begin{figure}[H]
    \centering
    \includegraphics[width=\textwidth,keepaspectratio]{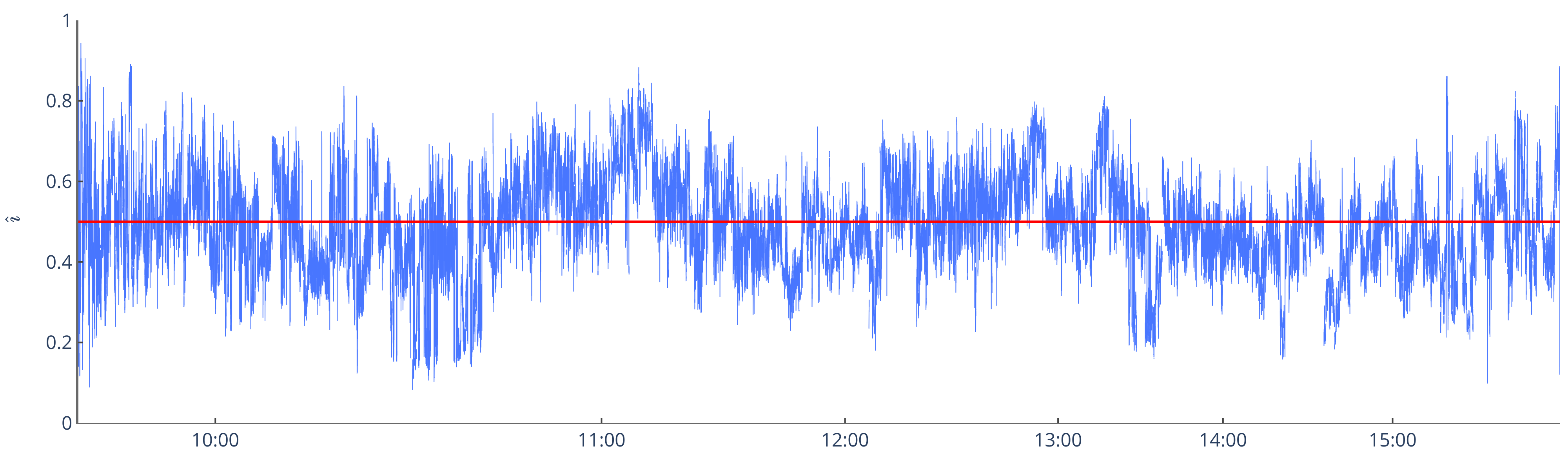}
    \caption{Imbalance of stock BMO from 09:30 to 16:00 on June, 7, 2017.} 
    \label{instant_imbalance}
\end{figure}

In the following, we propose some possible ways to perform such a monitoring based on the theoretical results.
The strategy comes from the following observation:
For a spoofing strategy to be successfully fulfilled, a market order has to be executed.\footnote{In this paper, we do not consider spoofing strategies involving only limit orders.}
Hence when observing an executed market order two situations may happen:
\begin{enumerate}[label=\textit{\arabic*-}]
    \item The market order is a legitimate one.
        In that case, the imbalance before this market order $\imath_-$ and after $\imath_+$ should follow statistically the classical long run behavior.
        In other words, in a legitimate situation, we should observe statistically the pair 
        \begin{equation*}
            (\imath_-, \imath_+)
        \end{equation*}
        for each market order.
    \item The market order is the result of a spoofing behavior.
        The implicit equilibrium without spoofing would be the pair $(\imath_-, \imath_+)$.
        After the market order is executed, the market imbalance should be back to its equilibrium $\imath_+$.
        However before the market order, the spoofer observes the implicit imbalance $\imath_-$ and decides to spoof according to this information, sending to the market $\imath_{spoof}(\imath_-)$ instead of $\imath_-$.
        The resulting observation for those spoofed market orders is therefore the pair
        \begin{equation*}
            (\imath_{spoof}(\imath_-), \imath_+)
        \end{equation*}
\end{enumerate}
Furthermore, spoofing strategies are supposed to happen sporadically but intensively within a short time horizon.
Before presenting some strategies, let us fix some notations:
\begin{itemize}
    \item $\Pi = \{t_1< t_2 <\ldots < t_M\}$ represents the time stamps of each (buy) market orders in a long sample (several weeks).
    \item $\hat{\imath}_-(t)$ and $\hat{\imath}_+(t)$ represents the imbalance before and after the market order happening at time $t$ in $\Pi$.
    \item $(\imath_-, \imath_+)$ represents the overall joint distribution of the imbalance right before and after each market orders fitted to the overall data.
        We assume that these represents the stable behavior of the market without spoofing, and therefore representative of legitimate market orders.
    \item $(\hat{\imath}_-^N(t), \hat{\imath}^N_+(t))$ represents the (short span) empirical distribution at time $t$ in $\Pi$ generated by the last $N$ market orders observed imbalances $(\hat{\imath}_-(s), \hat{\imath}_+(s))$, where $N\ll M$ is a short horizon sample size (in our case about 100).
    \item The previous theoretical part, even if not explicit in terms of solution allows us to compute numerically $\imath_{spoof}(\imath_-)$ for a given implicit imbalance $\imath_-$.
\end{itemize}

\subsection{Monitoring $\hat{\imath}^N_-$}
A first idea is to monitor the behavior of the short term imbalance $\hat{\imath}^N_-(t)$ as times passes to test whether it is statistically different from the equilibrium $\imath_-$.
This is however not adequate for the following reasons.
First, this is not related to spoofing behavior and might reflects some other market patterns.
Second, and more importantly, the sequence of $\hat{\imath}_{-}(t)$ for each market order is highly dependent.
Indeed, there might exists market conditions -- bullish/bearish, etc -- such that a short horizon sample $\hat{\imath}^N_-$ differs strongly from the long term behavior.
Figure \ref{Autocorrelation} provides empirical evidence about the sequential dependence of the imbalance $\hat{\imath}_-$ as well as $\hat{\imath}_+$ over time.
\begin{figure}[H]
    \centering
    \includegraphics[width=\textwidth,keepaspectratio]{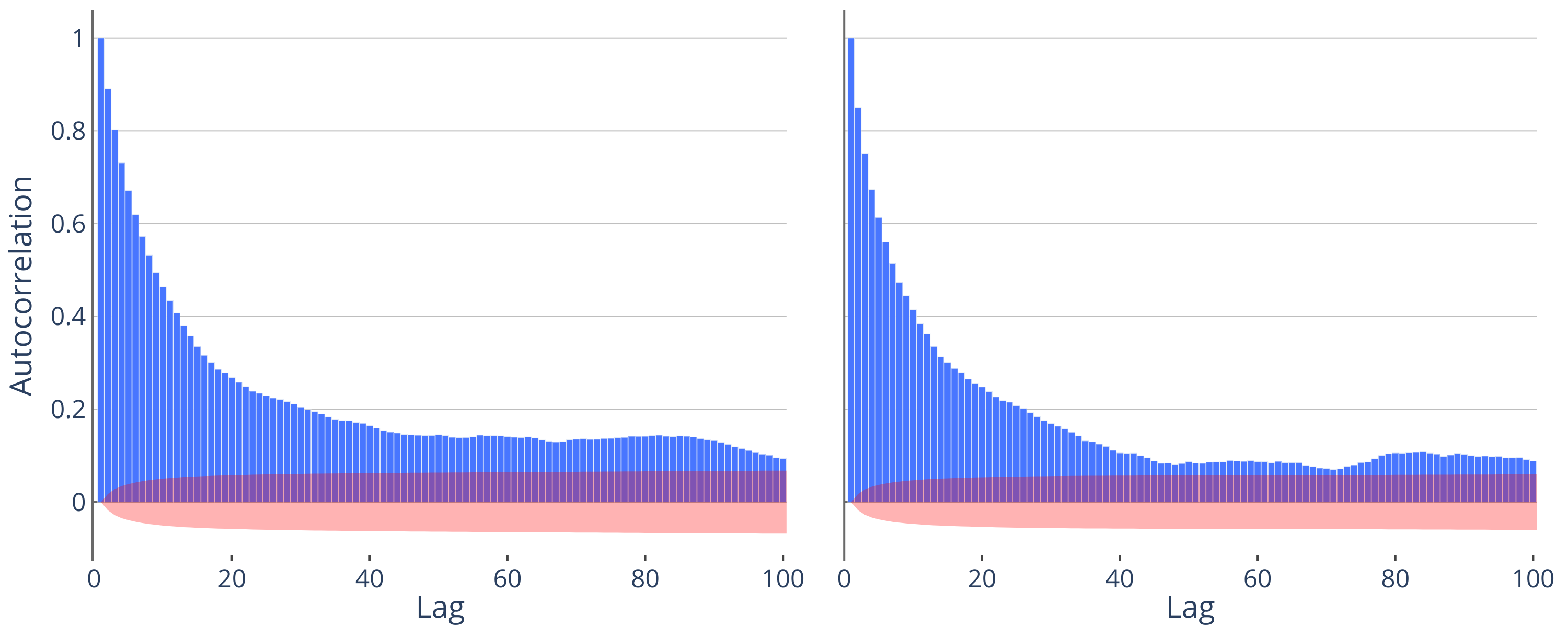}
    \caption{Left panel: $\hat{\imath}_-$ autocorrelation of stock BMO on June 7, 2017. Right  panel: $\hat{\imath}_+$ autocorrelation of stock BMO on June 7, 2017. Red area is the 95\% confidence interval of the autocorrelation.} 
    \label{Autocorrelation}
\end{figure}

\subsection{Monitoring $(\hat{\imath}^N_-, \hat{\imath}^N_+)$}
The statistical link towards discrimination of $\imath_{spoof}(\imath_-)$ from $\imath_-$ is the additional observation of the imbalance after the spoofing happen.
This provides statistical a-posteriori information about the implicit market equilibrium before spoofing which in case of spoofing can not be directly observed.
Figure \ref{joint_dis} shows on the left panel the joint distribution $(\imath_-, \imath_+)$ while the right panel represents, based on the model of the theoretical part and calibration, the joint distribution $(\imath_{spoof}(\imath_-), \imath_+)$ in the case of spoofing.
The spoofed joint distribution is skewed to the left in comparison to the non-spoofed one, in accordance to the theoretical analysis that spoofing decreases the imbalance -- in the buy order case -- before a market order.
\begin{figure}[H]
    \centering
    \begin{minipage}[c]{0.5\textwidth}
        \centering
        \includegraphics[width=\textwidth,keepaspectratio]{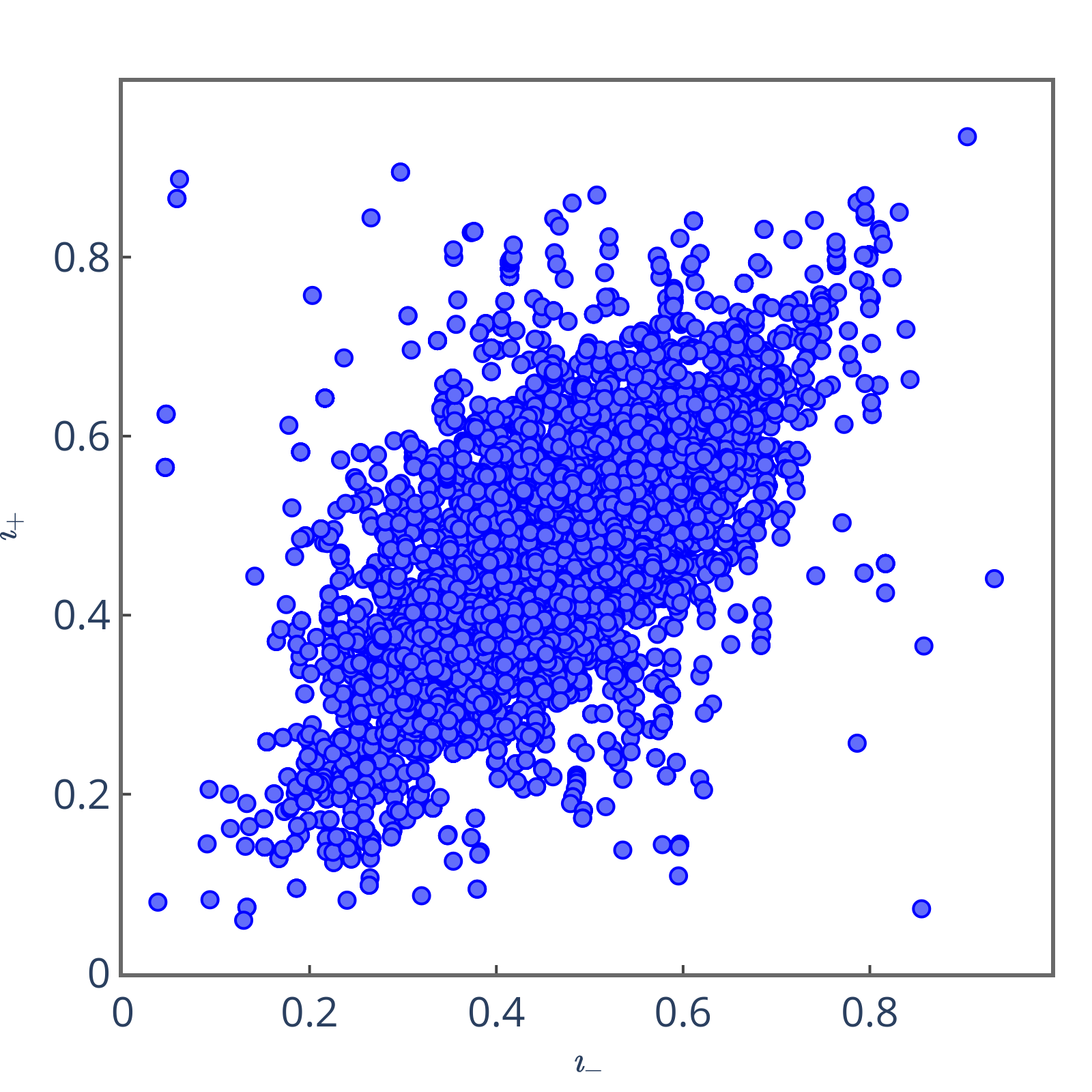}
    \end{minipage}%
    \begin{minipage}[c]{0.5\textwidth}
        \includegraphics[width=\textwidth,keepaspectratio]{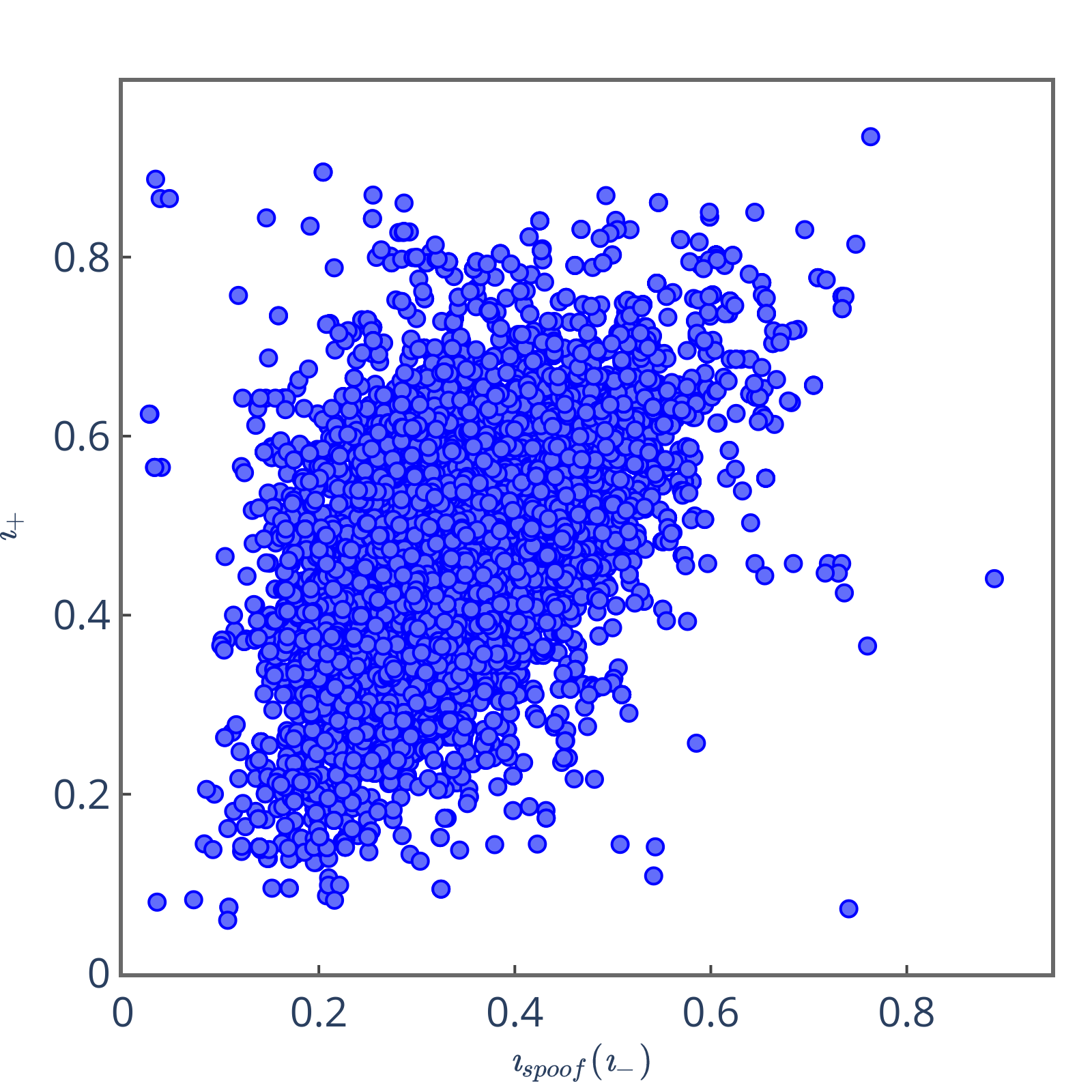}
    \end{minipage}
    \caption{Left panel: Empirical joint distribution of $ \left(\imath_-, \imath_+ \right) $. Right panel: Joint distribution of $\left(\imath_{spoof}(\imath_-), \imath_+ \right)$}
    \label{joint_dis} 
\end{figure}

A possible way to detect spoofing is therefore to compare the long run distribution $(\imath_-, \imath_+)$ with the short term empirical distribution $(\hat{\imath}_-^N, \hat{\imath}_+^N)$.
These two distributions encode the possibility to disentangle legitimate market behavior from spoofed ones.
However, as in the previous approach, the sequence of joint observation is once again not iid.
For short time horizon, the market may be legitimate, though far away from the long run distribution.

\subsection{Monitoring $\hat{\imath}^N_-$ conditioned on $\hat{\imath}^N_+$}
To overcome the previous shortcomings, the next approach is to monitor $\hat{\imath}^N_-(t)$ conditioned on the current market state $\hat{\imath}^N_+(t)$.
From our hypothesis, $\imath_+$ represents the steady state of the market at equilibrium after a market order.
It turns out that conditioned on $\hat{\imath}_+(t)$ the sequence of $\hat{\imath}_-(t)$ is closer to iid.
\begin{figure}[H]
    \centering
    \includegraphics[width=\textwidth,keepaspectratio]{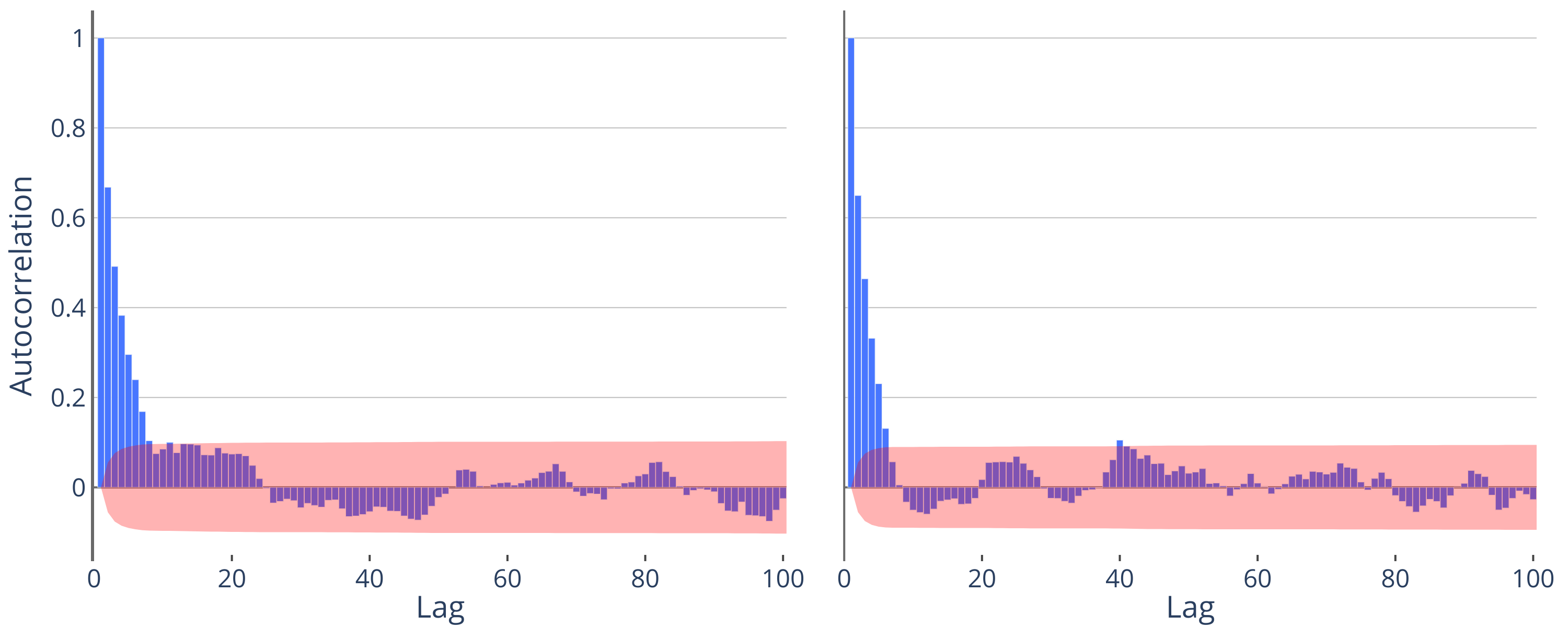}
    \caption{Left panel: Autocorrelation for $0.329\leq \hat{\imath}^N_- | \hat{\imath}^N_+ \leq 0.378$ of stock BMO on June 7, 2017. Right  panel: Autocorrelation for $0.561 \leq \hat{\imath}^N_- | \hat{\imath}^N_+ \leq 0.61$ of stock BMO on June 7, 2017. Red area is the 95\% confidence interval of the autocorrelation.} 
    \label{Conditional_utocorrelation}
\end{figure}
In order to detect spoofing behavior, instead of adopting a statistical test for which some parametric assumptions on the distribution has to be made, we measure the distance between $\imath_-$ and $\hat{\imath}_-^N$ conditioned on the current observed imbalance $\hat{\imath}_+^N$ using a non-parametric distance, the Wasserstein distance, see Appendix for precise definition of this distance.

On the one hand, we know the conditional distribution $\imath_- | \imath_+$ as well as $\imath_{spoof}(\imath_-)$.
Hence, we can deduce the conditional distribution of $\imath_{spoof} | \imath_+$.
We monitor the following two quantities
\begin{equation*}
    \underbrace{
        t \longmapsto d\left( \hat{\imath}^N_-(t), \imath_- |\imath^N_+(t) \right)}
    _{
        \substack{
            \text{Distance from the short term imbalance }\hat{\imath}_-^N(t)\\
            \text{to the \textbf{legitimate} imbalance }\imath_-\\
            \text{given that }\imath_+\sim \imath^N_+(t)
        }
    }
    \quad \text{and} \quad
    \underbrace{
    t \longmapsto d\left( \hat{\imath}^N_-(t), \imath_{spoof}|\imath_+^N(t)\right)}
    _{
        \substack{
            \text{Distance from the short term imbalance }\hat{\imath}_-^N(t)\\
            \text{to the \textbf{spoofed} imbalance }\imath_{spoof}\\
            \text{given that }\imath_+\sim \imath^N_+(t)
        }
    }
\end{equation*}

In Figure \ref{fig:wasserstein_1} and Figure \ref{fig:wasserstein_2}, are the plots thereof for the selected stocks.
We mark in red the area where the distance of the short term imbalance to the spoofed one is smaller than the distance to the legitimate one.
Since the spoofed imbalance is computed regardless whether spoofing is rewarding or not, we additionally mark in blue the area where spoofing is not worth according to Proposition \ref{prop:existence_spoofing}, that is when
\begin{equation}\label{eq:no-spoofing}
    \sup_k \left\{2 \bar{\rho}\mu^+(1 -\bar{\imath})\bar{\imath} w_k - Q_k\bar{\rho}- \nu_k  \right\} \leq 0
\end{equation}
where $\bar{\rho}$ represent the mean of the total volume of market orders purchased divided by the total volume available within $N$ ticks on the limit order book, $\bar{\imath}$ is the mean of $\imath^N_+(t)$.

According to Table \ref{table:stats_stocks}, different stocks have different frequencies, depth of order book as well as volume of incoming market orders.
Our approach does take these different factors into account and are reflected into the different plots.
According to the previous results, spoofing is more likely to happen if either $\mu^+$ -- the overall price impact -- or if $\rho$ -- the relative size of the market orders with respect to the liquidity present in the limit order book -- is large.
We recap for the stocks under study some of their key aspects as well as $\bar{\rho}$, the total amount of market orders within the observed time window relative to the average liquidity available within the observed depth.
\begin{table}[H]
    \begin{center}
            \begin{tabular}{@{}lcrrrr@{}}
                \toprule
                \multicolumn{1}{c}{Stock} & & \multicolumn{1}{c}{$f$ } & \multicolumn{1}{c}{Depth}  & \multicolumn{1}{c}{$\bar{\rho}$} & \multicolumn{1}{c}{$\mu^+$}\\
                                       \midrule
                AEM && 4   & 4 &54.1\% &0.411  \\
                BB  && 38  & 4 &17.8\% &0.467  \\
                BMO && 11  & 4 &62.9\% &0.103   \\
                CNR && 6   & 4 &89.5\% &0.398\\
                CPG && 53  & 5 &26.4\% &0.100   \\
                FNV && 3   & 5 &81.8\% &0.404  \\
                FR  && 60  & 3 &21.4\% &0.209 \\
                PPL && 26  & 4 &54.1\% &0.076 \\
                TD  && 20  & 4 &48.2\% &0.118  \\
                VET && 6   & 5 &75.4\% &0.315 \\
                \bottomrule
            \end{tabular}
            \caption{Frequency, depth, market order volume relative to liquidity available $\bar{\rho}$, price impact $\mu^+$ for the studied Stocks.}
    \end{center}
\end{table} 

%
%

Overall, we do not observe many crossings, if ever.
If a significant crossing happens, it is isolated showing some abnormal behavior.
This is particularly obvious for CPG and PPL.

An outlier in this series of observations is the stock TD where many crossings happens.
TD is a stock which is particularly active with a high rate of market orders -- the largest in our study group.
Furthermore the price impact is particularly low $0.118$.
However, from these market specificities, according to the equation \eqref{eq:no-spoofing}, most of the time it is not worth spoofing.
Hence, there remain only one significant crossing out of the area where spoofing would be worthwhile.

\begin{figure}[H]
    \centering
    \includegraphics[width=\textwidth,keepaspectratio]{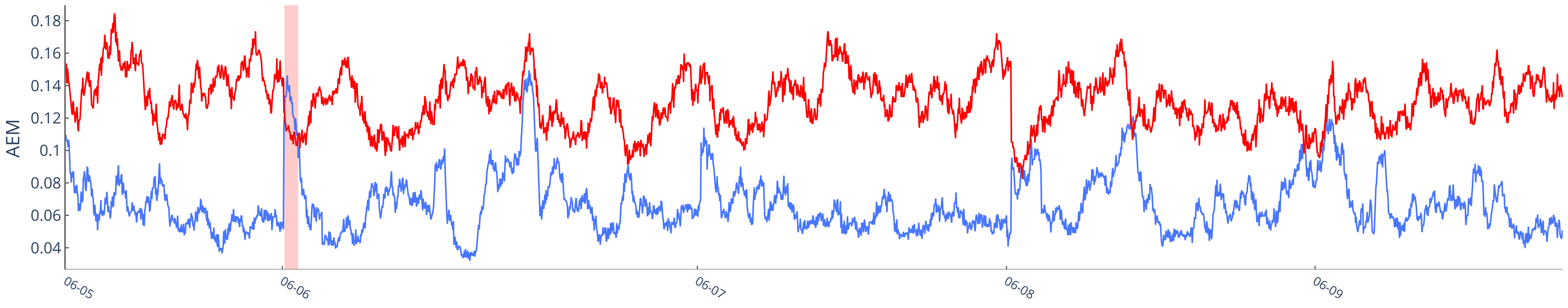}
    \includegraphics[width=\textwidth,keepaspectratio]{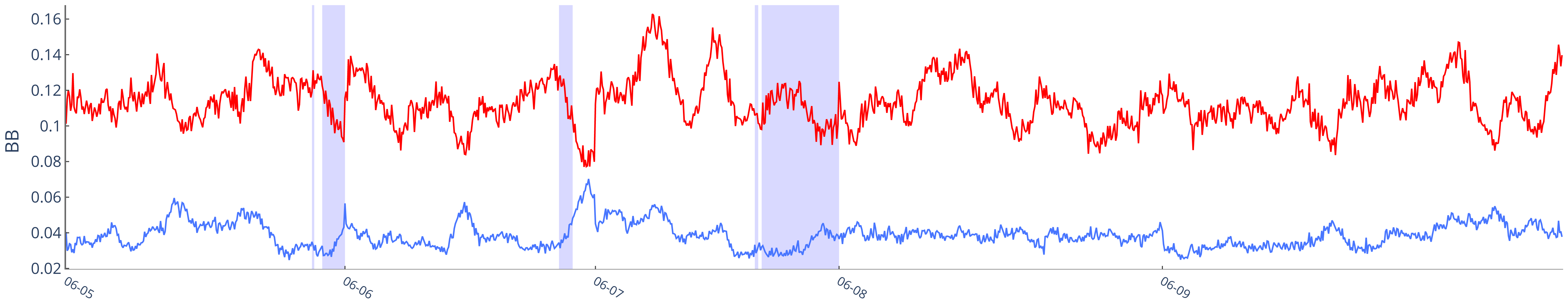}
    \includegraphics[width=\textwidth,keepaspectratio]{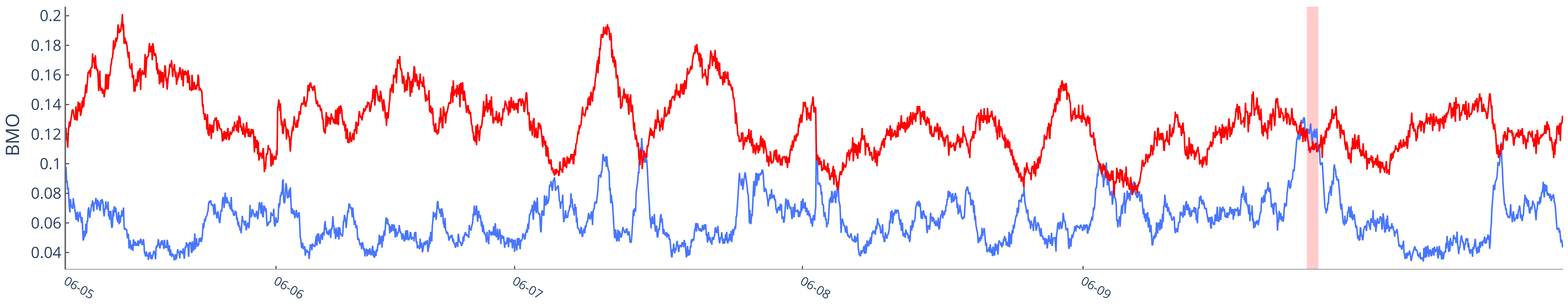}
    \includegraphics[width=\textwidth,keepaspectratio]{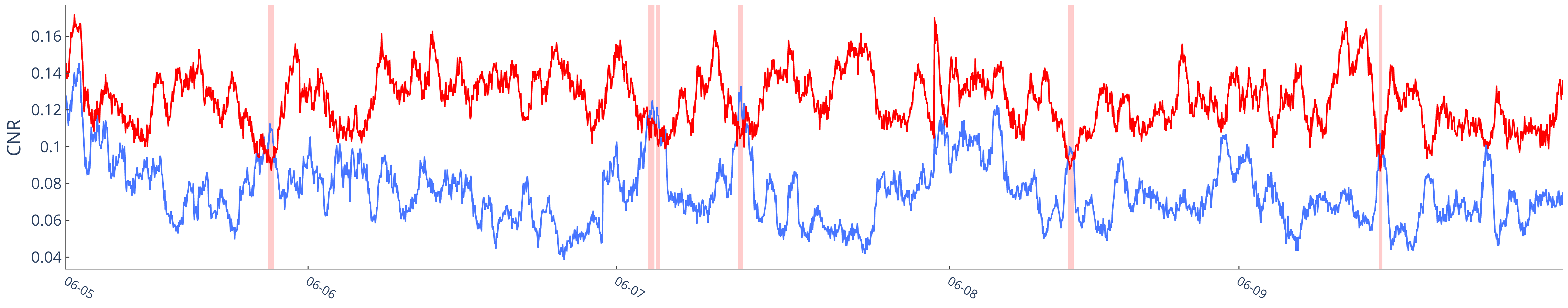}
    \includegraphics[width=\textwidth,keepaspectratio]{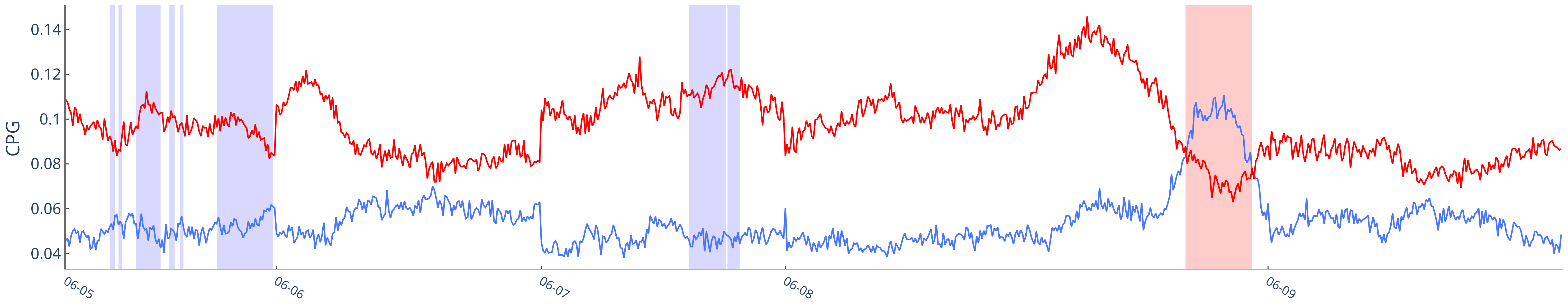}
    \caption{Time series of $t \mapsto d\left(\imath_{-}, \hat{\imath}^N_-(t) |\hat{\imath}^N_+(t) \right)$ (blue line) and $t_k \mapsto d\left(\imath_{spoof}, \hat{\imath}^N_-(t) |\hat{\imath}^N_+(t) \right)$ (red line) from June 5, 2017 to June 9, 2017. Red area is where $d\left(\imath_{spoof}, \hat{\imath}^N_- |\hat{\imath}^N_+ \right)\leq d\left(\imath_-, \hat{\imath}^N_- |\hat{\imath}^N_+ \right)$ over more than 10 consecutive times. Blue area  is where $\sup_k \left\{2 \bar{\rho}\mu^+(1 -\bar{\imath})\bar{\imath} w_k - Q_k\bar{\rho}- \nu_k  \right\} \leq 0 $  and $\bar{\rho}$, $\bar{\imath}$ are the mean of $\rho$, $\imath_N^+$ in each window respectively. From top to bottom are stock AEM, BB, BMO, CNR, and CPG respectively.}
    \label{fig:wasserstein_1}
\end{figure}

\begin{figure}[H]
    \centering
    \includegraphics[width=\textwidth,keepaspectratio]{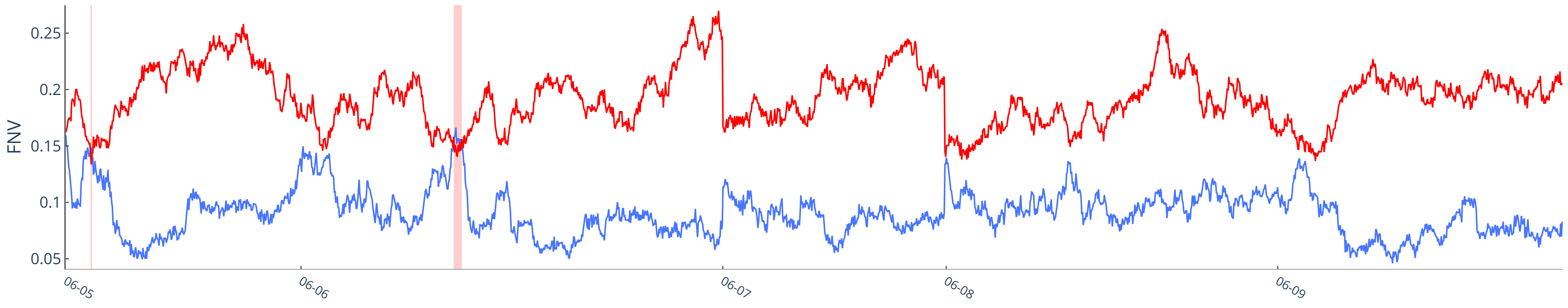}
    \includegraphics[width=\textwidth,keepaspectratio]{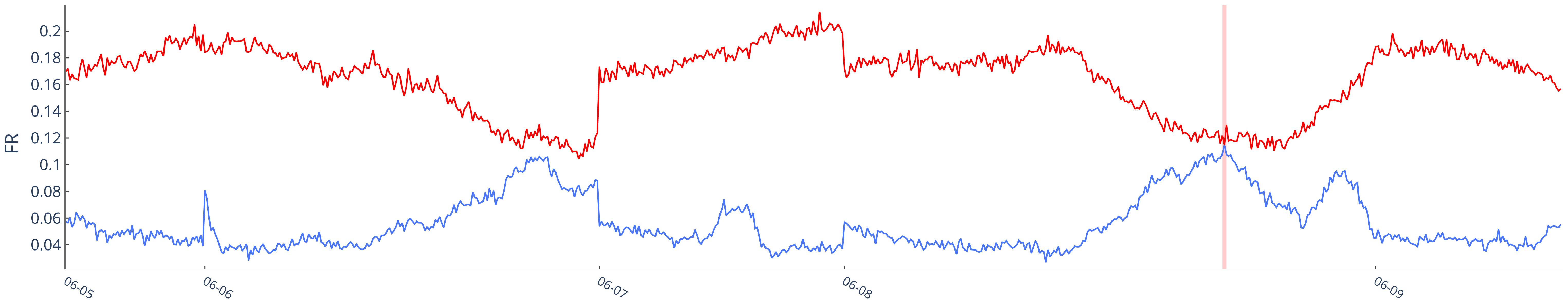}
    \includegraphics[width=\textwidth,keepaspectratio]{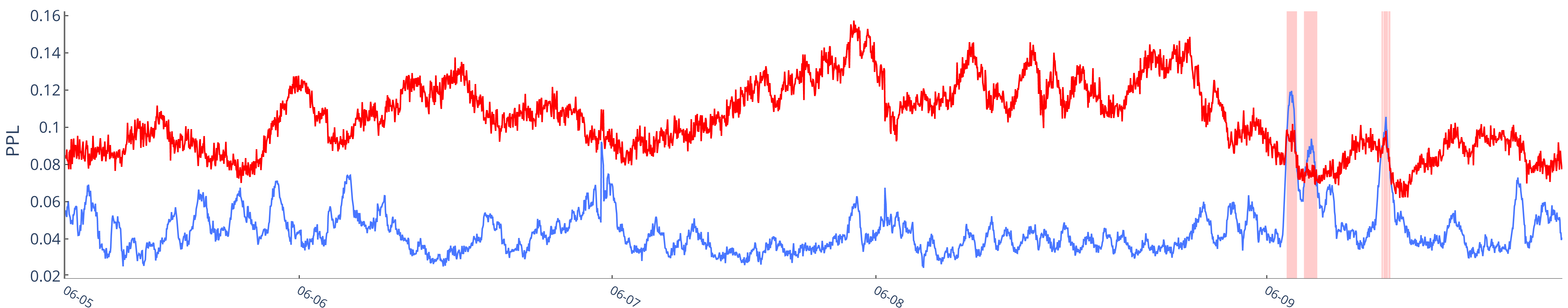}
    \includegraphics[width=\textwidth,keepaspectratio]{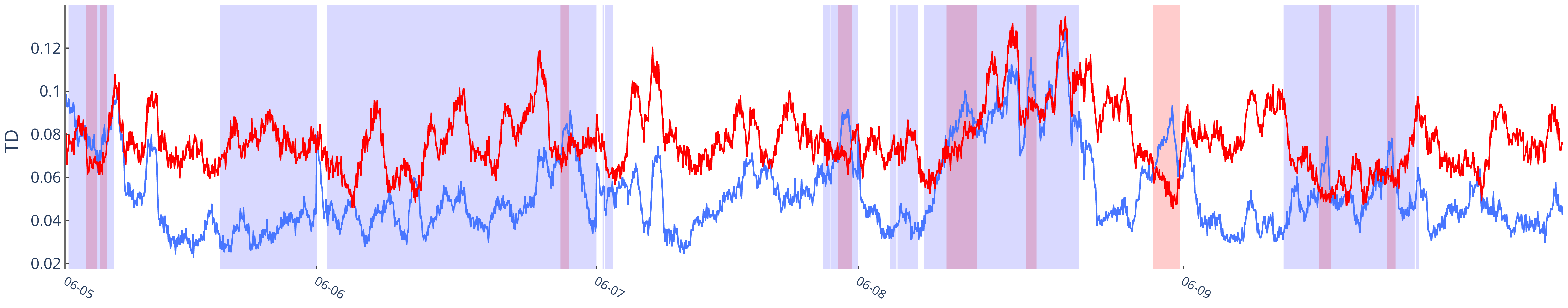}
    \includegraphics[width=\textwidth,keepaspectratio]{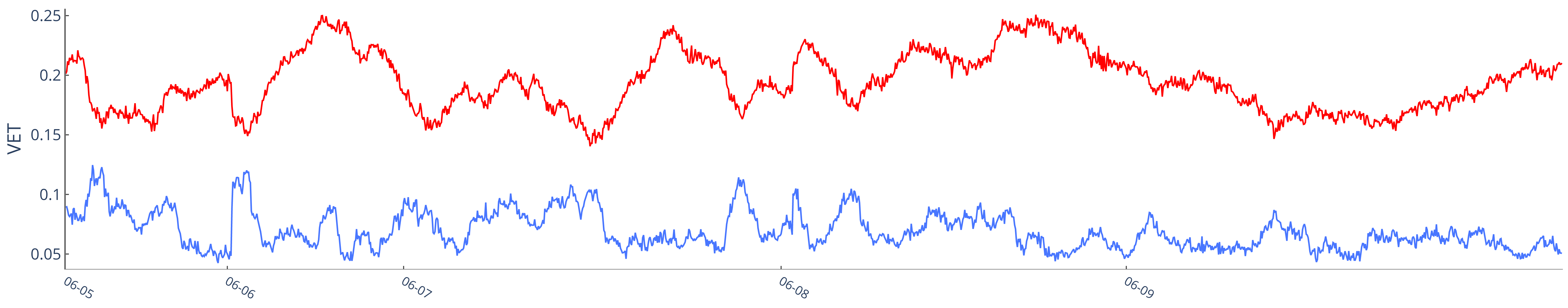}
    \caption{Time series of $t_k \mapsto d\left(\imath_{-}, \hat{\imath}^N_-(t) |\hat{\imath}^N_+(t) \right)$ (blue line) and $t_k \mapsto d\left(\imath_{spoof}, \hat{\imath}^N_-(t) |\hat{\imath}^N_+(t) \right)$ (red line) from June 5, 2017 to June 9, 2017. Red area is where $d\left(\imath_{spoof}, \hat{\imath}^N_- |\hat{\imath}^N_+ \right)\leq d\left(\imath_-, \hat{\imath}^N_- |\hat{\imath}^N_+ \right)$ over more than 10 consecutive times. Blue area  is where $\sup_k \left\{2 \bar{\rho}\mu^+(1 -\bar{\imath})\bar{\imath} w_k - Q_k\bar{\rho}- \nu_k  \right\} \leq 0 $  and $\bar{\rho}$, $\bar{\imath}$ are the mean of $\rho$, $\imath_N^+$ in each window respectively. From top to bottom are stock FNV, FR, PPL, TD and VET respectively.
    }
    \label{fig:wasserstein_2}
\end{figure}

\section{Conclusion}
In this paper we address the question of assessing quantitatively eventual spoofing behavior in high frequency trading.
In a stylised setting we present how a spoofing strategy from a market taker or maker is designed by manipulating the imbalance at different depth level to impact the subsequent price movement.
We provide and discuss the conditions for the market to allow for spoofing manipulations.
We subsequently solve the optimization problem from a spoofer perspective and derive/discuss the resulting imbalance after spoofing as a function of the market parameters.
We calibrate the weighted imbalance and price movement impact to Level 2 data provided by TMX.
Using these results we propose a quantification instrument to monitor in real time eventual spoofing behavior on the market using a conditional Wasserstein distance.
We illustrate these results on the data provided by TMX.

This approach is by no means a definitive answer to spoofing detection but rather a first take on.
The dynamic structure of the limit order book and strategy, the memory dependence of the parameters over time, as well as the specificities of one market with respect to another one are left to further study.
Also left to further studies is the consideration of multiple venue that could also be integrated into this framework.
Furthermore, there might be alternative approaches subject to new research directions -- monitoring arrival rates of orders, frequency of book/cancelling, etc. -- that could complement such a monitoring approach.

\section{Credits}
We thank Fields Institute for the organization of the series ``Fields-China Joint Industrial Problem Solving Workshop'' through which we got involved into this problem.
Thanks to TMX and their data analysis team for providing unique access to large datasets, high performance computing facilities as well as precious insights concerning high frequency trading.
Finally, the data analysis could not have been performed without the outstanding work and dedication of the open source community for the development of scientific computing/visualisation libraries/platforms such as NumPy \citep{numpy2020}, SciPy \citep{scipy2020}, Mistic \citep{mistic2011}, Pandas \citep{reback2020pandas, pandas2010}, Apache Spark, Plotly to name the most relevant for this work.

\appendix

\section{Proofs}\label{appendix:proofs}
\begin{proof}[Proof of Proposition \ref{prop:existence_spoofing}]
    Let $H = \rho a$, and the imbalance $\imath_k(v) = b/(b+a + w_k v) $ with $b = a \bar{\imath}/(1-\bar{\imath})$.
    It follows that the gradient of $\imath_k(v)$ is given by
    \begin{equation*}
        \nabla \imath_k(v) = - \frac{b}{\left( a+b +  w_k v \right)^2} w_k = - \frac{(1-\bar{\imath})}{a\bar{\imath}} \imath^2_k(v) w_k
    \end{equation*}
    \begin{itemize}[fullwidth]
        \item If $\bar{\imath}\leq 1/2$, from the previous equations, since $g(x) = x^2/(2a)$, it follows that there is no spoofing manipulation if and only if
            \begin{equation*}
                f(v):=Q_k\frac{v^2}{2a} +  Q_k \rho v + v \nu_k - 2\rho a \mu^+\left( \bar{\imath} - \imath_k(v)  \right) \geq 0
            \end{equation*}
            for any $v \geq 0$.
            Taking the gradient for this function yields
            \begin{equation*}
                \nabla f(v) = Q_k\frac{v}{a} + Q_k \rho + \nu_k -2 \rho \mu^+ \frac{1-\bar{\imath}}{\bar{\imath}} \imath^2_k(v) w_k
            \end{equation*}
            which is a monotone functional in $v$.
            Since $f(0) =0$, it follows that $f (v) \geq 0$ for any $v$ if and only if $\nabla f(0) \geq 0$ which is equivalent to
            \begin{equation*}
                Q_k  \rho + \nu_k \geq 2 \rho  \mu^+ (1-\bar{\imath})\bar{\imath} w_k
            \end{equation*}
        \item If $\bar{\imath}>1/2$, there is no spoofing manipulation if and only if 
            \begin{equation*}
                f(v):=q_k\frac{v^2}{2a} +  \rho q_k v - q_kkv - 2\rho a \mu^+\left( 1/2 - \imath_k(v)  \right) \geq 0
            \end{equation*}
            for any $v \geq 0$ the gradient of which is given by
            \begin{equation*}
                \nabla f(v) = q_k\frac{v}{a} + \rho q_k - q_kkv  -2 \rho \mu^+ \frac{1-\bar{\imath}}{ \bar{\imath}} \imath^2_k(v) w_k
            \end{equation*}
            Since $f(0)>0$, as previously argued, it follows that $f(v) \geq 0$ as soon as $\nabla f(0)\geqslant 0$, which yields the same conditions.
    \end{itemize}
\end{proof}

\begin{proof}[Proof of Proposition \ref{prop:optimal_spoofing}]
    Adopting the notations $Q:=Q_k$, $\nu = \nu_k$, $w = w_k$, $H = \rho a$, the goal is to optimize over $v\geq 0$ the objective function 
    \begin{align*}
        f(v) & = (1-Q) \frac{(\rho a)^2}{2a} + Q\frac{(\rho a +v)^2}{2a} + \rho a \mu^+ \left( 2\imath(v) - 1 \right)+ v \nu\\
             & = \frac{(\rho a)^2}{2a} + Q \frac{v^2}{2a} + \left( Q \rho + \nu \right)v +\rho a \mu^+ \left( 2\imath(v) - 1 \right)
    \end{align*}
    First order condition with Lagrangian $\lambda$ yields
    \begin{equation*}
        Q \frac{v}{a} + \left( Q\rho +\nu \right) - 2\rho w \mu^+ \frac{1-\bar{\imath}}{\bar{\imath}} \imath^2 = \lambda 
    \end{equation*}
    where $\imath : = \imath(v)$.
    Solving as a function of $\imath$ in $(0,1)$, we get
    \begin{align*}
        \lambda(\imath) & = \left[\left( Q\rho +\nu \right) - 2\rho w\mu^+ \frac{1-\bar{\imath}}{\bar{\imath}} \imath^2 \right]^+ \\
        v(\imath) & = \frac{a}{Q}\left[ 2\rho w\mu^+ \frac{1-\bar{\imath}}{\bar{\imath}} \imath^2  - \left( Q\rho +\nu \right) \right]^+
    \end{align*}
    Given now the optimal $v(\imath)$ as a function of $\imath$, we solve for $\imath$ such that
    \begin{equation*}
        \frac{1}{\imath} = \frac{a + b + w v(\imath)}{b} = \frac{1}{\bar{\imath}} + w\frac{1-\bar{\imath}}{a\bar{\imath}}v\left( \imath \right) = \frac{1}{\bar{ \imath}} + \frac{w}{Q} \frac{1-\bar{\imath}}{\bar{\imath}} \left[ 2\rho w\mu^+ \frac{1-\bar{\imath}}{\bar{\imath}} \imath^2  - \left( Q\rho +\nu \right) \right]^+
    \end{equation*}
    Since the left hand side in strictly decreasing on from $\infty$ to $\frac{1}{\bar{\imath}}$ on $(0, \bar{\imath}]$ and the right hand side is increasing from $\frac{1}{\bar{\imath}}$, on $(0, \bar{\imath}]$, there exists a unique solution which is a cubic root.
\end{proof}

\section{Round Trip Situation}\label{appendix:round_trip}

In this paper we mainly focus on the spoofing behavior from a market taker's viewpoint.
As for a market maker, spoofing behavior might be rewarding as well.
However, as seen in the following subsection, the rewards from spoofing are intertwined with the ones from pure market making.

We present a simple situation together with the numerical analysis in a blocked shape setting with the same model assumptions as before.
We assume that the potential market maker spoofer acts as follows:
At the first stage it decides to spoof with a volume $v$ at depth $k$ on the ask side to drive the price down and acquire an amount $H$ of shares after this price movement.
When the market comes back to its steady state, it liquidates $H$ and eventually buys back $v$ if it has been executed.
We assume that $v$ and $H$ are decided at the very beginning.\footnote{
    This stylised situation makes strong assumptions and simplifications.
    First $H$ is decided at time $0$ even if it is executed after the price movement.
    This is to prevent conditional optimization.
    Second, the liquidation of the inventory $H$ and $v$ occurs separately.
    Once again, to provide simplified optimization problem, while we could numerically consider a liquidation of the net inventory $H-v$.
    Finally, a second spoofing could happen at the second stage as in the previous section to liquidate the inventory.
}

After spoofing a volume $v$ at level $k$, as soon as the price moves the spoofer executes its market order $H$ for a revenue of
\begin{equation*}
-\left( H - v 1_{\{y>k\}} \right)\left( p + \delta \Delta + \delta (x+y) \right) - \delta G_{a}(H) - \delta (y-k) v 1_{\{y>k\}}
\end{equation*}
where $G_a(H)=H^2/(2a)$, $p = (p^+ +p^-)/2$, and $\Delta = (p^+ - p^-)/(2\delta)$ is the effective spread in ticks.
The spoofer then waits for the market to return to its steady state and liquidate the resulting inventory with market orders.
For ease of computation, we assume that it executes two market orders: One for $H$ and one for $v$ if it has been executed\footnote{Combining both in terms of $H-v1_{\{y>k\}}$ is cost effective but complicates the exposition of the result.} for a revenue of
\begin{equation*}
    H p - \delta \Delta H - G_b(H) -1_{\{y>k\}} \left( v (p + \delta \Delta) + \delta G_a(v) \right)
\end{equation*}
Adding both and integrating yields an average net revenue of
\begin{multline*}
    \frac{R(H, v)}{\delta} = - H \left(2\Delta + \mu^+\left( 2\imath_k(v) -1 \right)\right) - G_a(H) - G_b(H) \\
    + Q_k\left[ \left(k+\mu^+\left( 2\imath_k(v) - 1 \right)\right) v - G_a(v) \right]\\
    = - H \left(2\Delta + \mu^+\left( 2\imath_k(v) -1 \right)\right) - \frac{1}{a \bar{\imath}}\frac{H^2}{2} + Q_kv\left[ k + \mu^+\left( 2 \imath_k(v) - 1 \right) \right] - \frac{Q_k}{a} \frac{v^2}{2}
\end{multline*}

From this equation, we can derive the following remarks concerning the decision of the spoofer:
\begin{itemize}
    \item If $v=0$: This corresponds to the classical situation where a market maker takes advantage of the temporary market movement to execute a market order and cash out at a later time when the market comes back to its steady state.
        Clearly, it gets a positive gain if and only if
        \begin{equation*}
            \bar{\imath} \leq \frac{1}{2} - \frac{\Delta}{\mu^+}
        \end{equation*}
        In particular, if the effective spread $\Delta$ is large, or if $\mu^+$ is small, then it is impossible or the initial imbalance should be very small.
        In the case where this happens, then $\bar{H}^\ast$ is given by
        \begin{equation*}
            \bar{H}^\ast = a \bar{\imath}\left( 2 \Delta + \mu^+\left( 2\bar{\imath} -1 \right) \right)^-
        \end{equation*}
        with corresponding revenue of
        \begin{equation*}
            \bar{R}^\ast = \frac{1}{2}\left[a \bar{\imath}\left( 2 \Delta + \mu^+\left( 2\bar{\imath} -1 \right) \right)^-\right]^2
        \end{equation*}

    \item If $H = 0$: This corresponds to the classical situation where a market maker posts limit orders at a given depth to gain from possible fluctuations.
        This results in corresponding average revenue given $v$ of 
        \begin{equation*}
            \hat{R}(v) = Q_k v\left(k + \mu^+\left( 2\imath_k(v)-1 \right)\right) - \frac{Q_k}{a} \frac{v^2}{2} 
        \end{equation*}
        From this equation, even if the spoofer gets a positive gain of $k$ ticks buy executing its order, it will drive the imbalance $\imath_k(v)$ below $1/2$ and face adverse price movement that will offset its gains.
        The optimal $\hat{v}^\ast = \hat{v}^\ast(\bar{\imath})$ in that situation is not explicit, but can be easily numerically implemented and corresponds to an optimal revenue of
        \begin{equation*}
            \hat{R}^\ast = Q_k \hat{v}^\ast\left(k + \mu^+\left( 2\imath_k(\hat{v}^\ast)-1 \right)\right) - \frac{Q_k}{a} \frac{\hat{v}^2}{2}
        \end{equation*}
\end{itemize}
In general, solving for the optimal $H$ is straightforward with
\begin{equation*}
    H^\ast = a \bar{\imath}\left( 2 \Delta + \mu^+\left( 2 \imath_k(v)-1 \right)\right)^-
\end{equation*}
and corresponding average revenue:
\begin{multline*}
    \frac{R(v)}{\delta} = \frac{1}{2}\left[a \bar{\imath}\left( 2 \Delta + \mu^+\left( 2\imath_k(v) -1 \right) \right)^-\right]^2 + Q_k v\left(k + \mu^+\left( 2\imath_k(v)-1 \right)\right) - \frac{Q_k}{a} \frac{v^2}{2}
\end{multline*}
These two effects are difficult to disentangle from a truly spoofing gain when $H$ as well as $v$ are strictly positive.
However, this can be done numerically and the results are presented in Figure \ref{fig:round_trip}, where the spoofing region -- $H>0$ as well as $v>0$ -- is indicated.

We can however draw some stylised facts about the spoofing behavior from this market maker viewpoint.
The impact of the different parameters -- initial imbalance $\bar{\imath}$, probability of getting executed $Q_k$, local sensitivity of imbalance on the price impact $w=w_k$ as well as overall price deviation $\mu^+$ are similar to the previous case.
However, in addition to the previous part, the effective spread $\Delta$ acts negatively on the spoofing opportunity in that context.
Indeed, a positive market order $H$ is only triggered if $\mu^+(2 \imath^\ast - 1) \leq -2 \Delta$, which requires a spoofed imbalance satisfying
\begin{equation*}
    \imath_{spoof} \leq - \frac{\Delta}{\mu^+} + \frac{1}{2}
\end{equation*}
If $\Delta$ is too large or $\mu^+$ too low, a spoofing strategy is no longer rewarding.

\begin{figure}[H]
    \centering
    \includegraphics[width=0.8\textwidth,keepaspectratio]{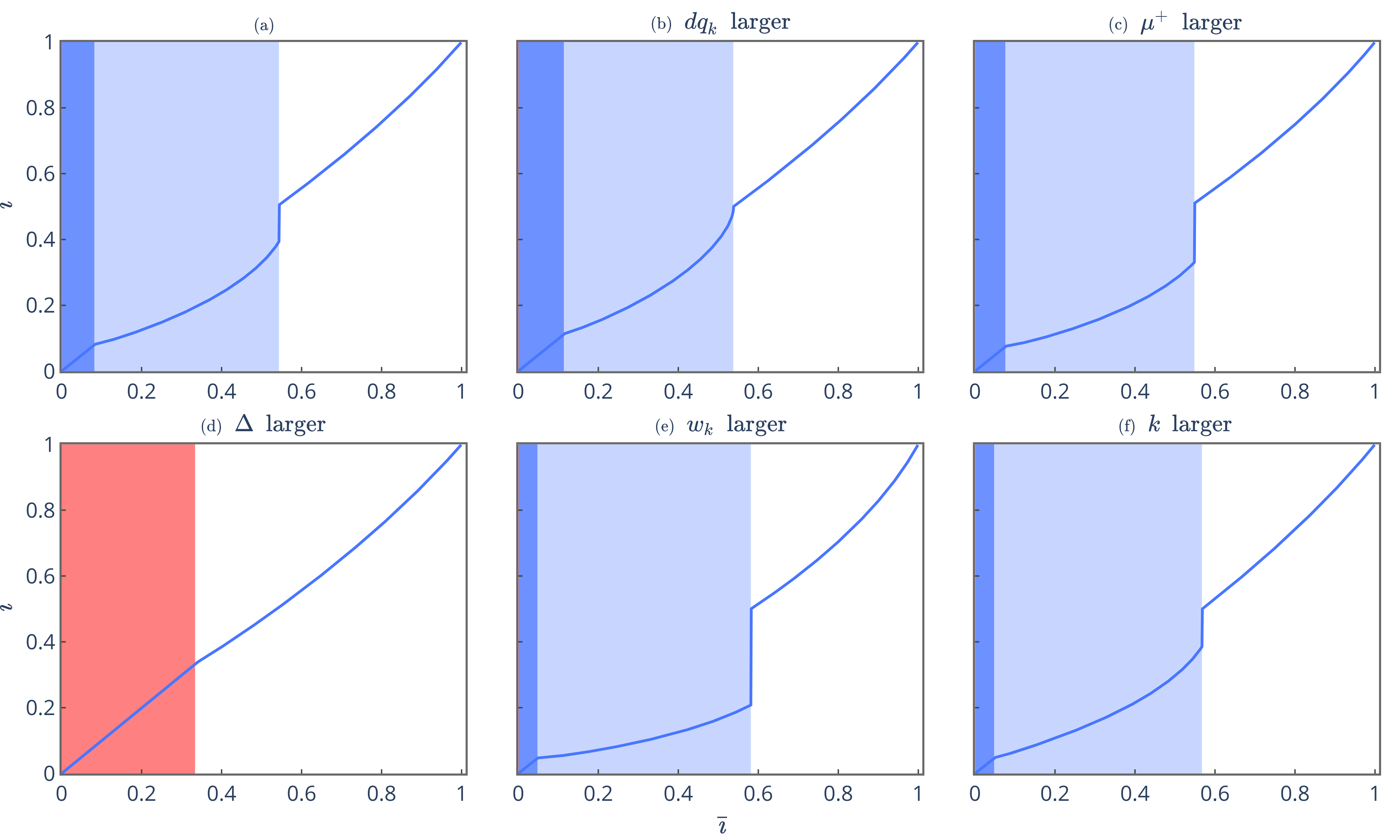}
    \caption{$\imath_{spoof}$ as a function of $\bar{\imath}$ and in $(a)$
    $\mu^ + =3,  \Delta = 0, k=1, w_k =0.1,  dq_y = 0.001 $ for all $y \geq k $.
    One parameter is increased each time with respect to $(a)$ where 
    $(b)$: $dq_y = 0.002 $ for all $y \geq k $; 
    $(c)$: $\mu^+ = 4$;
    $(d)$: $\Delta = 2$;
    $(e)$: $w_k = 0.4$;
    $(f)$: $k = 4$.
    Red area: $v^\ast=0, H^\ast=0$; Dark blue area: $v^\ast=0, H^\ast>0$; Light blue area: $v^\ast>0, H^\ast>0$; White area:  $v^\ast>0, H^\ast=0$.}
    \label{fig:round_trip}
\end{figure}

\section{Goodness of Fit}\label{appendix:goodness_fit}
The model for the price movement is conditioned on the imbalance.
So, as for the goodness of fit, we compare the empirical price change distribution with the fitted one conditioned on different level of imbalance.
\begin{figure}[H]
    \centering
    \includegraphics[width=\textwidth,keepaspectratio]{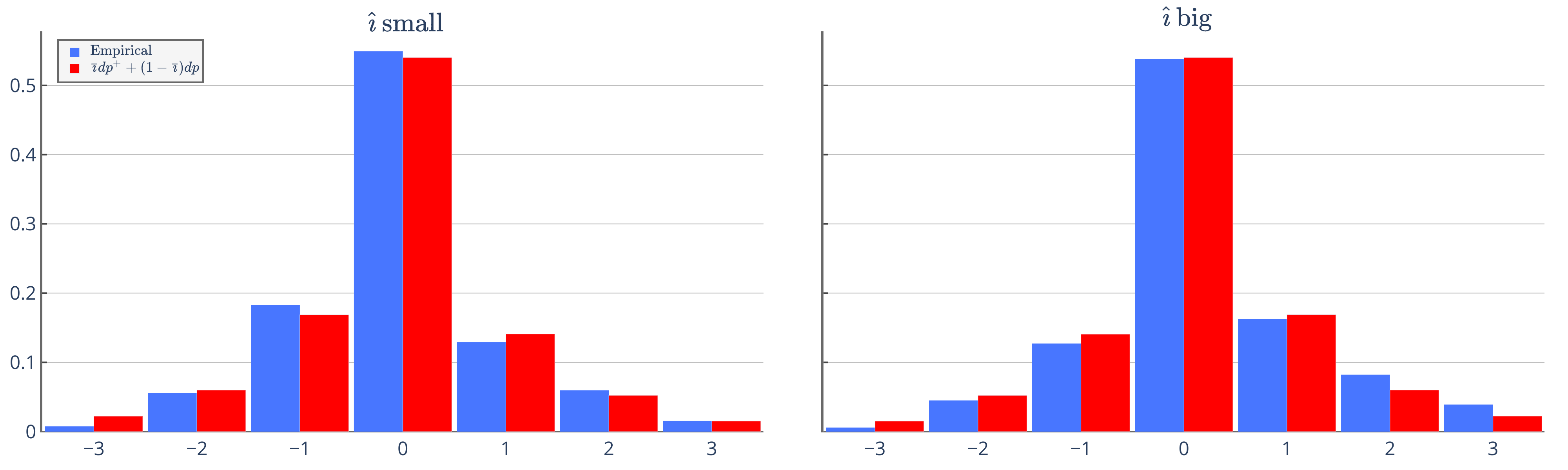}
    \caption{Empirical price change conditioned on the imbalance (blue) and $\bar{\imath} dp^+ +(1 - \bar{\imath})dp^- $ (red). Left panel, $\hat{\imath} <  0.298$ and right panel, $\hat{\imath} \geq 0.648$, which correspond to the low and top 5\% quantiles.} 
    \label{fig:conditioned_dpa}
\end{figure}
Since we fit a conditional distribution, we bucket the observed imbalance into 20 equidistant quantile intervals and perform a $\mathcal{X}^2$ test of goodness of fit between the empirical distribution conditioned on the imbalance within this bucket and $\bar{\imath} dp^+ + (1-\bar{\imath})dp^-$ where $\bar{\imath}$ is taken as the mid value of the imbalance for the corresponding bucket.
For each bucket, we have about 500 data points which we test against the null hypothesis that the empirical price change distribution is consistent with $\bar{\imath} dp^+ +(1 - \bar{\imath})dp^- $. 
Table \ref{table:p_value} shows the $\mathcal{X}^2$-statistics and $p$-values in each bucket.\footnote{Significance level $0.05$ corresponds to a $\mathcal{X}^2_6(0.95) = 12.592$.}
3 buckets exceed the bound (marked in black) where the null hypothesis is rejected while for the 17 others, the null hypothesis sustains.
\begin{table}[H]
    \begin{center}
        \resizebox{\columnwidth}{!}{
        \begin{tabular}{@{}ccccccccccc@{}}
            \toprule
            $l$ & 1 & 2& 3& 4& 5& 6& 7& 8& 9 &10 \\
            \hline
            $\bar{\imath}_l$& 0.245& 0.321& 0.356& 0.381& 0.4& 0.416& 0.43& 0.443& 0.456& 0.468            \\
            $\mathcal{X}^2$-statistic & 7.112& 7.635& 12.476& 6.253& 6.658& 8.041& 4.089& 2.814& 12.56& 4.177\\
            $p$-value & 0.311& 0.266& 0.052& 0.395& 0.354& 0.235& 0.665& 0.832& 0.051& 0.653\\
            \midrule
            $l$ & 11& 12& 13& 14& 15& \textbf{16}& 17& \textbf{18}& 19 & \textbf{20} \\
            \hline
            $\bar{\imath}_l$& 0.48& 0.493& 0.506& 0.52& 0.534& \textbf{0.549}& 0.568& \textbf{0.591}& 0.624& \textbf{0.702}            \\
            $\mathcal{X}^2$-statistic & 9.372& 8.775& 5.219& 3.014& 7.634& \textbf{16.046}& 2.334& \textbf{17.835} & 6.949& \textbf{17.953}\\
            $p$-value  & 0.154& 0.187& 0.516 & 0.807& 0.266& \textbf{0.014}& 0.887& \textbf{0.007}& 0.326& \textbf{0.006}\\
            \bottomrule
        \end{tabular}
    }
        \caption{Chisquare test on stock BMO from June 5, 2017 to June 9, 2017. }
        \label{table:p_value}
    \end{center}
\end{table}

\section{Computation of $\imath_{spoof}$ }\label{appendix:computation_ispoof}
For $t$ in $\Pi$, we compute $\hat{\imath}_-(t)$ and $\hat{\imath}_+(t)$, the imbalance before and after a market order, as follows:
\begin{align*}
    \hat{\imath}_-(t) &= \frac{\sum\limits_{k\leq N} \sum\limits_{t - f  \leq s<t} w_k \bar{v}^-_k(s)}{\sum\limits_{k\leq N} \sum\limits_{t - f  \leq s<t}  w_k \left(\bar{v}^-_k(s) + \bar{v}_k(s) \right)}\\
    \hat{\imath}_+(t) &= \frac{ \sum\limits_{k\leq N} \sum\limits_{t  \leq s<t+1} w_k \bar{v}^-_k(s)}{\sum\limits_{k\leq N} \sum\limits_{t\leq s<t+1}  w_k \left(\bar{v}^-_k(s) + \bar{v}_k(s) \right)} 
\end{align*}

For each $\hat{\imath}_-(t)$, the optimal spoofing strategy $v_{spoof}$ can be solved explicitly from
\begin{equation*}
    v_{spoof,k} = 1+\frac{ \left(1-\hat{\imath}_-(t) \right) w_k}{Q}\left[2\rho_t w_k \mu^+ \frac{1-\hat{\imath}_-(t)}{\hat{\imath}_-(t)} \imath^2 - \left( Q_k \rho_t +\nu \right)\right]^+ 
\end{equation*}
where $\imath = \frac{b_t}{b_t + a_t + w_k v_{spoof,k} }$, $a_t$ is the average size of the limit order book $f$ seconds before a market order
\begin{align*}
    a_t &= \frac{\sum\limits_{k = 1}^N \sum\limits_{t -f\leq s< t} \bar{v}_k(s) \Delta s}{Nf} \\
    \rho_t &= \frac{\sum\limits_{k = 1}^N \sum\limits_{t -f\leq s< t} H_s }{a_t}
\end{align*} 
where $H_s$ is the market order volume at time $s$.
In the same way, we can define $b_t$ and 
\begin{equation*}
    \imath_{spoof}(\hat{\imath}_-(t)) = \frac{b_t}{b_t+a_t + \sum_k w_k v_{spoof,k} }
\end{equation*}

\section{Wasserstein Distance, Kernel Approximation and Conditional Estimation}\label{appendix:wasserstein}
For two distributions $\mu$ and $\nu$ the 2-Wasserstein distance is defined as
\begin{align*}
    W_2\left( \mu, \nu \right) & = \left( \inf \left\{ \int_{}^{}\left( x-y \right)^2 \pi(dx,dy)\colon \pi_1 \sim \mu, \pi_2 \sim \nu  \right\}\right)^{1/2}\\
                               & = \left( \int_{0}^{1}\left( q_{\mu}(\alpha) - q_{\nu}(\alpha) \right)^2 d\alpha  \right)^{1/2}
\end{align*}
From a generic perspective, the conditional distance we consider is as follows:
If we assume that 
\begin{equation*}
    \left(\imath_-, \imath_+\right)\sim K(y, dx)\otimes \mu(dy)
\end{equation*}
where $\mu \sim \imath_+$ and $K(y, \cdot) \sim \imath_-|\imath_+=y$, it follows that
\begin{equation*}
    \left(\imath_{spoof}\left( \imath_- \right), \imath_+\right) \sim K_{spoof}\left( y, dx \right)\otimes \mu(dy) \quad \text{where}\quad K_{spoof}(y, \cdot) = K\left( y,\cdot \right) \circ \imath_{spoof}^{-1}  
\end{equation*}
Hence given $\imath_+ =y$, we have
\begin{equation*}
    W_2\left(K(y, \cdot), K_{spoof}(y, \cdot)\right) = \left( \int_{0}^{1}\left( q_{K(y, \cdot)}(\alpha) - \imath_{spoof}\left( q_{K(y, \cdot)} (\alpha)\right) \right)^2 d\alpha  \right)^{1/2} 
\end{equation*}

Heuristically we wish to monitor the following two quantities
\begin{equation*}
    \underbrace{W_2\left( \imath_-, \hat{\imath}_-^N \right)|\imath_+^N}_{\substack{\text{Distance from the short term imbalance }\hat{\imath}_-^N\\\text{to the \textbf{equilibrium} imbalance }\imath_-\\\text{given that }\imath_+\sim \imath^N_+}} \quad \text{and} \quad \underbrace{W_2\left( \imath_{spoof}(\imath_-), \hat{\imath}_-^N \right)|\imath_+^N}_{\substack{\text{Distance from the short term imbalance }\hat{\imath}_-^N\\\text{to the \textbf{spoofed} imbalance }\imath_{spoof}\\\text{given that }\imath_+\sim \imath^N_+}}
\end{equation*}

From the data, we can calibrate the joint distribution $(\imath_-, \imath_+)$ as well as $(\imath_{spoof}(\imath_-), \imath_+)$.\footnote{The former fits well with a joint normal distribution, while the second one with a skewed normal distribution, see Figure \ref{fit_joint_dis}. Other parametrization could eventually be used too.}
Hence, we have a parametrization of $K(y, dx)$ and $K_{spoof}(y, dx)$ for every $y$.
However for each value $\hat{\imath}_+(l)$ from the discrete distribution $\hat{\imath}_+^N$ we only have a single sample point $\hat{\imath}_-(l)$ at hand.
In order to overcome this problem we bucket the values of $\hat{\imath}_+(l)$ in the sample of $\hat{\imath}^N_+$ into several equidistant quantile intervals to get a Kernel approximation of $\hat{\imath}^N_-$.

The monitoring strategy at a given time $t_k$ in $\Pi$ is given as follows
\begin{enumerate}[label=\arabic* - , fullwidth]
    \item Consider the discrete short term joint distribution $(\hat{\imath}^N_-(t_k), \hat{\imath}_+^N(t_k))$ given by the sample
        \begin{equation*}
            \left( \imath_-(s), \imath_+(s) \right),\quad s = t_k, \ldots, t_{k-N+1}
        \end{equation*}
        of the last $N$ pairs of imbalances before time $t$.
    \item We define the following $L$ buckets of equal cardinality $N/L$ 
        \begin{equation*}
            J_l = \left\{ s \colon s=t_k, \ldots, t_{k-N+1}, q_{\hat{\imath}_+^N}\left( \frac{l-1}{L} \right)\leq \imath_+(s) < q_{\hat{\imath}_+^N}\left( \frac{l}{L} \right) \right\}, \quad l=1\ldots, L
        \end{equation*}
        as well as the mid point of each
        \begin{equation*}
            \imath_l = \frac{L}{N} \sum_{s \in J_l} \imath_+(s)
        \end{equation*}
    \item For each $l$, we generate a random sample $\imath_-^{N,L}$ and $\imath_{spoof}^{N,L}$ of $N/L$ points each drawn from $K(\imath_l, \cdot)$ and $K_{spoof}(\imath_l, \cdot)$, respectively.
    \item For each $l$ we compute the Wasserstein distances
        \begin{equation*}
            W_{2}\left( \imath^{N,l}_-, \hat{\imath}_-^{N,L} \right) \quad \text{and}\quad W_{2}\left( \imath^{N,l}_{spoof}, \hat{\imath}_-^{N,L} \right)
        \end{equation*}
        where $\hat{\imath}^{N,L}_-$ is the discrete distribution out of the sample $\hat{\imath}_-(s)$ for $s$ in $J_l$.
        This is an approximation for the Wasserstein distance
        \begin{equation*}
            W_2\left( \imath_-, \hat{\imath}_n^N \right)|\imath^N_+ \approx \imath_l \quad \text{and} \quad W_2\left( \imath_{spoof}, \hat{\imath}_n^N \right)|\imath^N_+ \approx \imath_l
        \end{equation*}
    \item we aggregate all together and define the indicators
        \begin{align*}
            d\left( \imath_-, \hat{\imath}^N_- | \hat{\imath}^N_+  \right) & := \frac{1}{L}\sum_{l=1}^LW_{2}\left( \imath^{N,l}_-, \hat{\imath}_-^{N,L} \right)\\
            d\left( \imath_{spoof}, \hat{\imath}^N_- | \hat{\imath}^N_+  \right) & := \frac{1}{L}\sum_{l=1}^LW_{2}\left( \imath^{N,l}_{spoof}, \hat{\imath}_-^{N,L} \right)
        \end{align*}
\end{enumerate}
\begin{remark}
    To enhance the accuracy of this indicator, we run step 3 to 5 a couple of times with different samples and average again.
\end{remark}
%

As for the Kernel approximation, we fit $\left(\imath_-, \imath_+\right)$ to a bivariate normal distribution
\begin{equation*}
    \left(\imath_-, \imath_+\right) \sim \mathcal{N} \left(\mu_1,\mu_2,\sigma_1,\sigma_2, \rho\right)
\end{equation*}
thus the conditional distribution is also a normal distribution and $K( y, \cdot)$ is its density function
\begin{equation*}
    \imath_- | \imath_+ = y   \sim \mathcal{N} \left(\mu_1 + \frac{\sigma_1}{\sigma_2} \rho(y - \mu_2), (1 - \rho^2)\sigma_1^2\right)
\end{equation*}

Similarly, we fit $\left(\imath_{spoof}(\imath_-), \imath_+\right)$ to a bivariate skewnormal distribution
\begin{equation*}
    \left(\imath_{spoof}(\imath_-), \imath_+\right) \sim \mathcal{S} \mathcal{N}\left( \alpha, \xi, \Omega\right)
\end{equation*}
where $\alpha = [\alpha_1, \alpha_2]^\top$, $\xi = [\xi_1, \xi_2]^\top$, $\Omega = \begin{bmatrix}
    w_1 & w \\
    w & w_2
\end{bmatrix}$.
It can be derived that
\begin{equation*}
    K_{spoof}(y , \cdot)  = \phi\left(\frac{\cdot - \xi_1^c}{\sqrt{w_{11.2}}}\right) \frac{\Phi\left(\alpha_1 \sqrt{\omega_1} ( \cdot - \xi_1^c) + x_0^\prime\right)}{\Phi(x_0)}
\end{equation*}
where $\phi$, $\Phi$ are the density function and cumulative distribution function of a standard normal distribution, and 
\begin{align*}
&\xi_1^c = \xi_1 + \frac{\omega}{\omega_2}(y - \xi_2 ), \quad
w_{11.2}  = w_1 - \frac{w^2} {w_2}\\
&\bar{\alpha}_2 = \frac{\alpha_2 + \sqrt{\frac{w^2}{w_1 w_2}} \alpha_1 } {\sqrt{ 1 + \frac{w_{11.2} }{w_1}\alpha_1^2}}, \quad
x_0  = \frac{\bar{\alpha}_2}{ \sqrt{w_2}} (y - \xi_2)\\
&x_0^\prime = \sqrt{ 1 + \frac{w_{11.2} }{w_1}\alpha_1^2} x_0
\end{align*}
Figure \ref{fit_joint_dis} shows that $K$ and $K_{spoof}$ fit well with real data in Figure \ref{joint_dis}.
\begin{figure}[H]
    \centering
    \begin{minipage}[c]{0.5\textwidth}
        \centering
        \includegraphics[width=\textwidth,keepaspectratio]{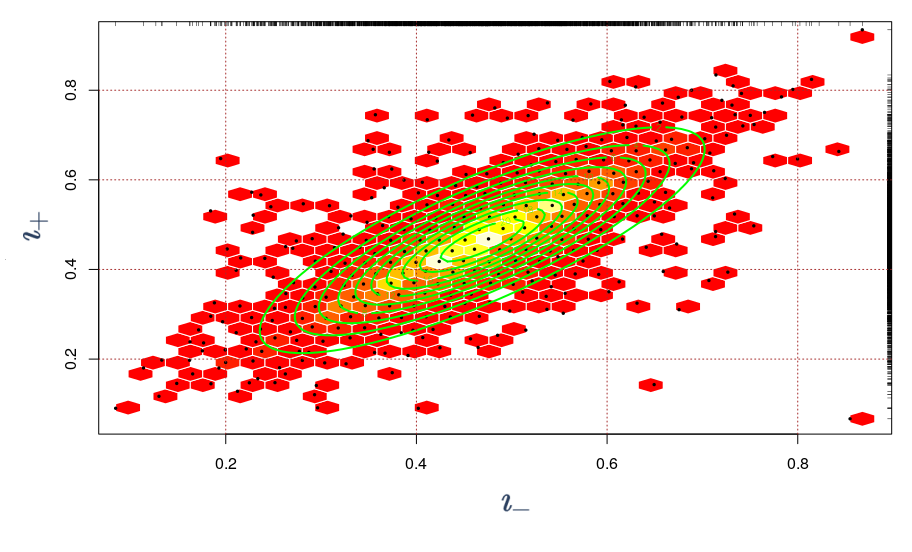}
    \end{minipage}%
    \begin{minipage}[c]{0.5\textwidth}
        \includegraphics[width=\textwidth,keepaspectratio]{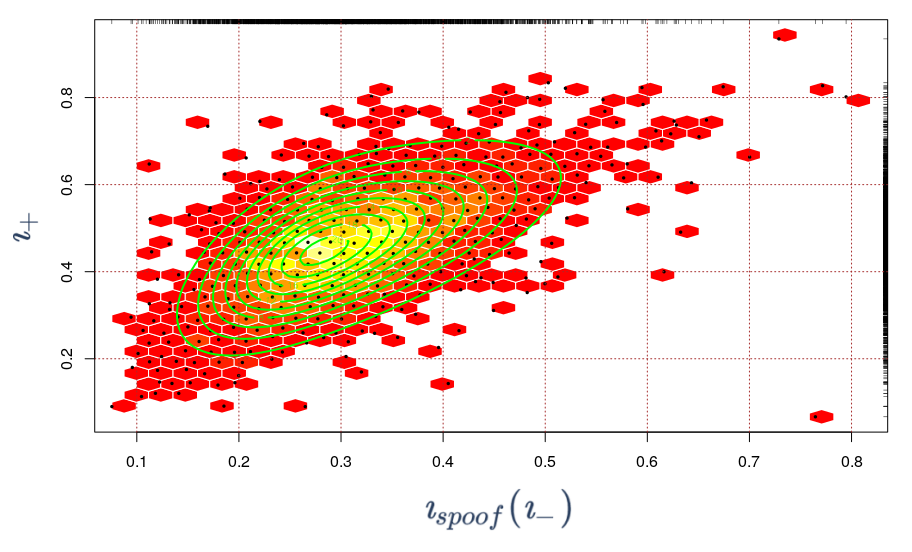}
    \end{minipage}
    \caption{Left panel: Empirical joint distribution of $ \left(\imath_-, \imath_+ \right) $ and $K$ (green contours) . Right panel: Joint distribution of $\left(\imath_{spoof}(\imath_-), \imath_+ \right)$ and $K_{spoof}$ (green contours).}
    \label{fit_joint_dis} 
\end{figure}

\bibliographystyle{abbrvnat}
\bibliography{biblio} 

\end{document}